\newtheorem{defeng}{Definition}[section]
\newtheorem{theorem}[defeng]{Theorem}
\newtheorem{lemma}[defeng]{Lemma}
\newtheorem{conjecture}[defeng]{Conjecture}
\newtheorem{corollary}[defeng]{Corollary}
\newtheorem{property}[defeng]{Property}
{\theorembodyfont{\rmfamily} }
{\theorembodyfont{\rmfamily} }
{\theorembodyfont{\rmfamily} }
{\theoremstyle{break}\theorembodyfont{\rmfamily} }
{\theoremstyle{break}\theorembodyfont{\rmfamily} }
\newcounter{claim}
\newenvironment{proof}[1][]%
 {\noindent {\setcounter{claim}{0}\sc proof ---
   }{#1}{}}{\hfill$\Box$\vspace{2ex}} 
\newenvironment{claim}[1][]%
{\refstepcounter{claim}\vspace{1ex}\noindent{(\it\arabic{claim}){#1}{}}\it}{\vspace{1ex}}
\newenvironment{proofclaim}[1][]%
	{\noindent {}{#1}{}}{ This proves~(\arabic{claim}).\vspace{1ex}}
\newcommand{\sm}{\setminus} 
\title{Excluding 4-wheels}
\author{Pierre Aboulker\thanks{LIAFA, Universit\'e Paris 7 --
     Paris Diderot (France),\ email: pierreaboulker@liafa.jussieu.fr.
    Partially supported by \emph{Agence Nationale de la Recherche} under reference
    \textsc{anr 10 jcjc 0204 01}.}
}
\begin{document}

\maketitle

\begin{abstract}
A 4-wheel is a graph formed by a cycle $C$ and a vertex not in $C$ that has at least four neighbors in $C$. We prove that a graph $G$ that does not contain a 4-wheel as a subgraph  is 4-colorable and we describe some structural properties of such a graph.
\end{abstract}

\section{Introduction}
A $wheel$ is a graph formed by a cycle $C$, called the $rim$, and a vertex $u$ (not in $V(C)$), called the $center$, such that $u$ has at least three neighbors in $C$. A wheel with rim $C$ and center $u$ is denoted by $(u,C)$. If $(u,C)$ is a wheel,  $v \in V(C)$ and $uv$ is an edge, we say that $uv$ is a $spoke$ of the wheel.
For any $k \ge 3$, a \textit{$k$-wheel} is a wheel with at least $k$ spokes.

We say that a graph $G$ {\em contains} a graph $H$ if $H$ is isomorphic to a subgraph of $G$. If a graph $G$ does not contain a graph $H$ we say that $G$ is \textit{H-free}. If $\mathcal H$ is a class of graphs we say that a graph $G$ is $\mathcal H$-free if for any $H \in \mathcal H$, $G$ is $H$-free. We say that two non adjacent vertices $u$ and $v$ are $twins$ if $N(u)=N(v)$.

In \cite{thomassen} Thomassen and Toft proved the following result. (An alternative proof is given in \cite{wheelfree}). 

\begin{theorem} \label{thomassenToft}
If $G$ is a 3-wheel-free graph, then either it contains a pair of twins or it contains a vertex of degree at most 2.
\end{theorem}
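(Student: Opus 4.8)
The plan is to argue by contradiction: suppose $G$ is $3$-wheel-free, has minimum degree at least $3$, and has no pair of twins, and let us produce a $3$-wheel. (It is enough to treat a single connected component of $G$, so we may assume $G$ connected.)

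First I would record the two elementary consequences of $3$-wheel-freeness that drive the whole argument. (i) For every cycle $C$ of $G$ and every vertex $u\notin V(C)$, $u$ has at most two neighbours on $C$ — otherwise $(u,C)$ is a $3$-wheel. (ii) In particular, for every vertex $v$ the graph $G[N(v)]$ contains no cycle (a cycle $C\subseteq N(v)$ would give the $3$-wheel $(v,C)$), so $G[N(v)]$ is a forest; hence it has a vertex of degree at most one, and $G$ is $K_4$-free.

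The heart of the proof uses an extremal object: let $P=p_0p_1\cdots p_\ell$ be a longest path of $G$. Since $P$ cannot be extended, $N(p_0)\subseteq V(P)$, and because $\deg(p_0)\ge 3$ there are indices $1=a_1<a_2<\cdots<a_k$ with $k\ge 3$ and $p_0p_{a_i}\in E(G)$; thus $C=p_0p_1\cdots p_{a_k}p_0$ is a cycle and $p_0p_{a_2}$ is one of its chords. One then analyses how the rest of $G$ is attached to $C$, using (i), using the maximality of $P$ — applied both to $P$ itself (so $N(p_\ell)\subseteq V(P)$) and to the other longest paths obtained from $P$ by rerouting through a chord $p_0p_{a_i}$, whose endpoints then also have all their neighbours on $V(P)$ — and using the degree hypothesis on $p_1$, on $p_{a_k}$, and on the tail $p_{a_k}\cdots p_\ell$ when $a_k<\ell$. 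The recurring dichotomy is: either some vertex acquires three neighbours on one cycle, giving a $3$-wheel, or one is driven into a configuration that forces two non-adjacent vertices to have exactly the same neighbourhood; the no-twins hypothesis excludes the second alternative, so every case ends in a contradiction.

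The main obstacle is the bookkeeping in this last step: one must organise the case distinction by the relative positions along $P$ of the neighbours of $p_0$ (and, symmetrically, of $p_\ell$), and in each ``near-wheel'' case verify that the two candidate vertices really do have identical neighbourhoods — this is where one repeatedly uses that an extra neighbour of either candidate would, via the path--cycle structure, create a longer path or an honest $3$-wheel. It is also worth isolating at the outset the degenerate case $V(G)=V(C)$, where $G$ is essentially a Hamiltonian graph of minimum degree $\ge 3$: there a chord joining two vertices at cyclic distance two already yields a $3$-wheel (the vertex skipped by the chord has a further chord, and it sees three vertices of the cycle obtained by short-cutting across that chord), after which the no-twins and minimum-degree hypotheses dispose of the remaining chord patterns.
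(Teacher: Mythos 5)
First, a point of reference: the paper does not prove this statement at all --- it is quoted as Theorem~\ref{thomassenToft} from Thomassen and Toft \cite{thomassen} (with an alternative proof in \cite{wheelfree}), so there is no in-paper argument to compare yours with; your proposal has to stand on its own. Judged that way, it has a genuine gap: everything after the correct but routine preliminaries is a plan, not a proof. The observations (i) and (ii), the choice of a longest path $P=p_0\cdots p_\ell$, the fact that $N(p_0)\subseteq V(P)$ and the rotation trick giving further longest paths are all standard openings, but the entire content of the theorem is contained in the step you describe as ``the recurring dichotomy is: either some vertex acquires three neighbours on one cycle, or one is driven into a configuration that forces two non-adjacent vertices to have exactly the same neighbourhood.'' No case of this dichotomy is actually carried out, no inventory of configurations is given, and no argument is supplied that the candidate twins really have identical neighbourhoods; you yourself flag this bookkeeping as ``the main obstacle.'' That obstacle is the theorem. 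Thomassen and Toft's result is not a short longest-path exercise --- their proof (and the alternative one in \cite{wheelfree}) goes through substantially more structure (non-separating induced cycles), and nothing in your sketch shows that the longest-path case analysis closes at all, let alone in the clean two-outcome way you assert. In particular, note that $p_0$ lies on the cycle $C=p_0p_1\cdots p_{a_k}p_0$ you build, so its three neighbours on $C$ give no wheel; producing a center \emph{off} a cycle is exactly the nontrivial part, and it is nowhere done.

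The final paragraph on the ``degenerate'' Hamiltonian case illustrates the problem. The observation about a chord of cyclic distance two is fine: if $v_1v_3$ is a chord of the Hamiltonian cycle, then $v_2$ has a third neighbour and sees three vertices of the shortcut cycle avoiding $v_2$, giving a $3$-wheel. But ``after which the no-twins and minimum-degree hypotheses dispose of the remaining chord patterns'' is again an assertion, and it is not a trivial one: $K_{3,3}$ is Hamiltonian, $3$-regular and $3$-wheel-free, so in this subcase the analysis cannot end in a wheel --- it must actually \emph{locate} the twins from the chord pattern, and no mechanism for doing so is indicated (a shortest-chord argument, for instance, only pushes the other chords to the far side of the cycle and does not by itself produce either outcome). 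Until the case analysis --- both on the general longest-path configuration and in the Hamiltonian subcase --- is written out and shown to terminate in a $3$-wheel or in verified twins, the proposal is an outline of an approach whose feasibility is unestablished, not a proof of Theorem~\ref{thomassenToft}.
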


From this Theorem they easily get the following corollary that settles a conjecture proposed by Toft in \cite{toft}.

\begin{corollary}
If $G$ is a 3-wheel-free graph, then $G$ is 3-colorable.
\end{corollary}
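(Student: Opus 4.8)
The plan is to argue by induction on $|V(G)|$, with Theorem~\ref{thomassenToft} doing all the work at the inductive step. The base case is immediate: any graph on at most three vertices is trivially 3-colorable.

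For the inductive step I would take a 3-wheel-free graph $G$ and assume the statement holds for every 3-wheel-free graph with fewer vertices. Applying Theorem~\ref{thomassenToft} to $G$ yields two cases. If $G$ has a vertex $v$ of degree at most $2$, I would pass to $G - v$, note that it is still 3-wheel-free (3-wheel-freeness is hereditary, since any subgraph of $G$ isomorphic to a $3$-wheel would already be a $3$-wheel in $G$), obtain a proper 3-coloring of $G - v$ by induction, and then extend it to $G$: since $|N(v)| \le 2$, at least one of the three colors is not used on $N(v)$ and can be assigned to $v$. If instead $G$ contains a pair of twins $u, v$ — so $uv \notin E(G)$ and $N(u) = N(v)$ — I would again pass to the 3-wheel-free graph $G - v$, 3-color it by induction, and give $v$ the color already assigned to $u$. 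This coloring is proper because every neighbor of $v$ is a neighbor of $u$, hence does not carry $u$'s color, and $u$ itself is not adjacent to $v$.

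There is essentially no obstacle here: the mathematical content lies entirely in Theorem~\ref{thomassenToft}, and the corollary is a short induction on top of it. The only points to check, both trivial, are that the two reduction operations (deleting a vertex of degree at most $2$, and deleting one of a pair of twins) preserve 3-wheel-freeness, and that in each case a 3-coloring of the reduced graph extends to $G$.
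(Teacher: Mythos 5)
Your proof is correct and matches the paper's intended argument: the paper proves the analogous 4-wheel-free corollary (Corollary~\ref{maincolor}) by exactly this induction, deleting either a low-degree vertex or one twin and extending the coloring. Nothing further is needed.
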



 In \cite{turner} Turner proved the following general result on $k$-wheel-free graphs for any $k \ge 4$.

\begin{theorem} \label{thmturner}
For any integer $k \ge 4$, if $G$ is a $k$-wheel-free graph, then $G$ contains a vertex of degree at most k.
\end{theorem}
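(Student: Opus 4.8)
The plan is to prove the contrapositive: if $\delta(G)\ge k+1$, then $G$ contains a $k$-wheel. Since a $k$-wheel is exactly a cycle $C$ together with a vertex $u\notin V(C)$ having at least $k$ neighbours on $C$, it suffices to produce one such pair $(u,C)$. The first move is to recast this as a statement about $G-u$ for a well-chosen $u$: take $u$ to be a vertex of an end-block $B$ of $G$ that is not a cut vertex of $G$ (if $G$ is $2$-connected, any vertex works; such a $u$ exists since every vertex of $B$ other than its cut vertex has degree $\ge k+1$ in $B$, so $|V(B)|\ge k+2$). Then $u$ is not a cut vertex, so $H:=G-u$ is connected; moreover $N(u)\subseteq V(B)$, so $|N(u)|\ge k+1$. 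Set $S:=N(u)$. Every vertex of $S$ has degree $\ge k$ in $H$, and every vertex of $V(H)\setminus S$ has degree $\ge k+1$ in $H$ (it was never adjacent to $u$). Our goal becomes: find a cycle of $H$ meeting $S$ in at least $k$ vertices — such a cycle, with centre $u$, is a $k$-wheel in $G$.

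To do this I would take a cycle $C$ of $H$ maximising $\sigma:=|V(C)\cap S|$ and argue by contradiction that $\sigma\ge k$. Suppose $\sigma\le k-1$. Then $|S\setminus V(C)|\ge (k+1)-(k-1)=2$, and we may assume no vertex outside $C$ has $\ge k$ neighbours on $C$, since such a vertex would be the centre of a $k$-wheel with rim $C$ and we would be done. List $V(C)\cap S=\{c_1,\dots,c_\sigma\}$ in cyclic order (the \emph{spokes}); they cut $C$ into $\sigma$ arcs, each containing no spoke in its interior.

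The heart of the argument is a rerouting step. Pick $w\in S\setminus V(C)$. If $w$ has at least two neighbours $d_1,\dots,d_r$ on $C$ (in cyclic order), they cut $C$ into $r$ arcs; if one of these arcs $B_j$ contains no spoke in its interior, then replacing the sub-path of $B_j$ from $d_j$ to $d_{j+1}$ by the detour $d_j\,w\,d_{j+1}$ gives a cycle of $H$ still containing all $\sigma$ spokes and in addition containing $w\in S$, contradicting maximality of $\sigma$. Hence \emph{every} arc cut out by the $d_j$ contains a spoke in its interior, a rigid interleaving pattern; one then exploits the leftover degree at $w$ (it has $\ge k-r$ further neighbours, all off $C$), at the $d_j$ and $c_i$ (degrees $\ge k+1$ resp.\ $\ge k$), and the presence of a \emph{second} special vertex off $C$, to force a contradiction. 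If instead every $w\in S\setminus V(C)$ has at most one neighbour on $C$, then each such $w$ has $\ge k-1$ neighbours inside $H-V(C)$, and one passes to $H-V(C)$, which retains enough degree on $S\setminus V(C)$ to apply induction on $|V(G)|$ (the loss of one unit of degree is what must be watched here).

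The step I expect to be the main obstacle is precisely this case analysis: when the naive detour is blocked we are handed a configuration in which $w$'s attachment points alternate one-for-one with the spokes, and converting that — together with the degree hypotheses and the existence of further off-rim special vertices — into a genuine contradiction is delicate, as is controlling the degree loss in the ``at most one neighbour on $C$'' case. I note one genuine shortcut: whenever $H=G-u$ happens to be $k$-connected, Dirac's theorem (any $k$ vertices of a $k$-connected graph lie on a common cycle) immediately puts $k$ members of $S$ on a cycle of $H$, finishing the proof; but $\delta(G)\ge k+1$ does not force $G$, or any piece of it, to be that highly connected, so the surgery cannot be avoided in general.
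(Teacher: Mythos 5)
The paper does not actually prove Theorem~\ref{thmturner}; it is quoted from Turner's paper \cite{turner}, so there is no in-text proof to compare with. Judged on its own, your proposal is a plan rather than a proof, and the missing part is exactly the mathematical core. Your setup is fine: choosing a non-cut vertex $u$ in an end block so that $H=G\sm\{u\}$ is connected, setting $S=N(u)$ with $|S|\ge k+1$, and reducing the problem to finding a cycle of $H$ through at least $k$ vertices of $S$ is all correct. But when you take a cycle $C$ maximizing $\sigma=|V(C)\cap S|$ and suppose $\sigma\le k-1$, the only thing you actually establish is that for each $w\in S\sm V(C)$ with at least two neighbours on $C$, every arc between consecutive attachment points of $w$ contains a spoke in its interior. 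From this interleaving pattern you assert that one can ``exploit the leftover degree at $w$ \dots{} and the presence of a second special vertex off $C$ to force a contradiction,'' but no contradiction is derived, and none is obvious: the extra neighbours of $w$ lie off $C$, and without any connectivity hypothesis (minimum degree $\ge k+1$ gives none) you cannot control where paths from them re-enter $C$, which is precisely the kind of control that tools such as the Fan Lemma, Perfect's lemma or Theorem~\ref{watkins} provide only under $k$- or $(k-1)$-connectivity. This is the step where the whole difficulty of the theorem sits, and it is left open.

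The fallback case is also not sound as stated. If every $w\in S\sm V(C)$ has at most one neighbour on $C$, you propose to pass to $H\sm V(C)$ and ``apply induction on $|V(G)|$,'' but the inductive hypothesis (minimum degree at least $k+1$, in the contrapositive formulation) need not hold there: vertices of $S\sm V(C)$ retain only degree $\ge k-1$ or $k$, and arbitrary vertices of $H\sm V(C)$ may have lost many neighbours to $C$, so the induction does not apply; moreover you never state an inductive statement strong enough to absorb this degree loss. So both branches of the case analysis — the one you yourself flag as ``the main obstacle'' and the induction branch — are genuinely incomplete, and the proposal does not constitute a proof of Theorem~\ref{thmturner}.
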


Note that the result stated in \cite{turner} is slightly weaker
than Theorem~\ref{thmturner}, but the proof given by Turner in
\cite{turner}  proves exactly the version given here.

Theorem \ref{thmturner} implies easily the following result.

\begin{corollary}
For any integer $k \ge 4$, if $G$ is a $k$-wheel-free graph, then $G$ is (k+1)-colorable.
\end{corollary}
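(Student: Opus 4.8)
The plan is to obtain this as an immediate consequence of Theorem~\ref{thmturner} via the standard ``bounded degeneracy implies colourability'' argument. The first thing to record is that being $k$-wheel-free is a property inherited by subgraphs: if $H$ is a subgraph of $G$ and $G$ is $k$-wheel-free, then every subgraph of $H$ is also a subgraph of $G$, so no $k$-wheel can appear in $H$ either. In particular Theorem~\ref{thmturner} applies not only to $G$ itself but to every (induced, or indeed arbitrary) subgraph of $G$, which is precisely the assertion that $G$ is $(k+1)$-degenerate.

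Next I would run an induction on $|V(G)|$. The base case $|V(G)|\le k+1$ is trivial since any graph on at most $k+1$ vertices is $(k+1)$-colorable. For the inductive step, apply Theorem~\ref{thmturner} to $G$ to get a vertex $v$ with $\deg_G(v)\le k$. The graph $G-v$ is still $k$-wheel-free (it is a subgraph of $G$), so by the induction hypothesis it admits a proper coloring with the color set $\{1,\dots,k+1\}$. Since $v$ has at most $k$ neighbors in $G$, its neighbors use at most $k$ of the $k+1$ colors, so at least one color remains available; assign it to $v$. This extends the coloring of $G-v$ to a proper $(k+1)$-coloring of $G$, completing the induction.

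I do not expect any genuine obstacle here; the statement is a textbook corollary of the degeneracy bound. The only point worth stating explicitly, and the only thing that could be gotten wrong, is the hereditary nature of $k$-wheel-freeness under the ``contains'' relation used in the paper (subgraph containment rather than minor or induced-subgraph containment), since that is what licenses applying Theorem~\ref{thmturner} repeatedly down the induction. Once that observation is in place, the rest is the routine greedy argument above.
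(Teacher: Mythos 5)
Your proof is correct and is exactly the intended argument: the paper states this corollary without proof as an easy consequence of Theorem~\ref{thmturner}, and the standard degeneracy/greedy induction you give (using that $k$-wheel-freeness is inherited by subgraphs) is the same routine argument the paper uses for its analogous Corollary~\ref{maincolor}.
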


In this paper we extend Theorem \ref{thomassenToft} and strengthen Theorem \ref{thmturner} for $k=4$ by proving the following result.

\begin{theorem} \label{main}
If $G$ is a 4-wheel-free graph, then either it contains a pair of twins or it contains a vertex of degree at most 3.
\end{theorem}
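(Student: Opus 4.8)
The plan is to argue by contradiction. Suppose the statement fails, and let $G$ be a counterexample with $|V(G)|$ minimum: so $G$ is 4-wheel-free, contains no pair of twins, and $\delta(G)\ge 4$. The first task is to improve the connectivity of $G$. Since 4-wheel-freeness is hereditary under taking subgraphs, and since a twin pair or a vertex of degree at most $3$ in a suitably ``reduced'' graph can be pulled back to such a configuration in $G$, minimality lets us assume $G$ is $2$-connected (dispatch the disconnected and cut-vertex cases via endblocks), and then push further across $2$-cuts by a clique-sum/amalgam style reduction: replace one side of a $2$-cut by a bounded gadget, check that no new 4-wheel is created, and track degrees and twins. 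The delicate point already here is that an endblock, or a reduced side of a cut, need not be twin-free (a twin of a cut-vertex may appear), so these reductions must be set up to respect all three hypotheses at once; I expect this bookkeeping to be one of the more delicate parts.

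Next, apply Theorem~\ref{thmturner} to the 4-wheel-free graph $G$: it yields a vertex of degree at most $4$, and since $\delta(G)\ge 4$ this is a vertex $v$ of degree exactly $4$, say $N(v)=\{a,b,c,d\}$. The key consequence of 4-wheel-freeness is the property $(\star)$: there is no cycle of $G-v$ through all of $a,b,c,d$ — indeed such a cycle would be a cycle of $G$ avoiding $v$ on which $v$ has four neighbors, i.e.\ the rim of a 4-wheel with center $v$. The heart of the proof is then to derive a contradiction from $(\star)$ together with $\delta(G)\ge 4$, twin-freeness, and the connectivity arranged above. I would study the structure of $H:=G-v$ relative to the four terminals $a,b,c,d$: $H$ is connected, so by Menger's theorem the obstruction to a common cycle through the terminals must be witnessed by a small separator — a cutset of size at most $3$ whose removal splits the terminals into a tree-like pattern with at least three pieces, along the lines of the classical obstructions to four vertices lying on a common cycle. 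Running a case analysis on this separator structure, in each case I would either (i) reroute paths of $H$ through the separator to build a cycle through $a,b,c,d$, contradicting $(\star)$; or (ii) combine such a rerouting with the four edges at $v$, or with fan-systems at the terminals forced by $\delta(G)\ge 4$, to build a 4-wheel centered at some \emph{other} vertex, contradicting 4-wheel-freeness; or (iii) show the configuration is rigid enough that two vertices are forced to have the same neighborhood, producing twins; or (iv) identify a strictly smaller 4-wheel-free graph whose twins or low-degree vertex pull back to $G$.

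The main obstacle, as in all results of this type, is precisely this last step: there are several genuinely different ways the four neighbors of $v$ can be blocked from a common cycle while every vertex still has degree at least $4$ and no two vertices are twins, and each must be closed off either by exhibiting an explicit 4-wheel or by a reduction. Making the reductions simultaneously compatible with 4-wheel-freeness, twin-freeness, and the minimum-degree condition — so that the inductive hypothesis really applies to the smaller graph and its conclusion lifts back — is where the bulk of the work will lie.
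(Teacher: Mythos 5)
Your proposal is an outline, not a proof: the step you yourself identify as ``the heart of the proof'' --- the case analysis on how the four neighbours of the degree-4 vertex $v$ can be blocked from a common cycle, to be closed off via options (i)--(iv) --- is exactly where all the work lies, and it is entirely deferred (``I would study\dots'', ``I would either\dots''). Two concrete problems with the plan as stated. First, the clean obstruction you invoke (a cutset of size at most $3$ separating the four terminals into distinct components, i.e.\ the Watkins--Mesner characterization used in the paper) is only valid under a $3$-connectivity hypothesis; your preliminary connectivity reduction (``replace one side of a $2$-cut by a bounded gadget, check that no new 4-wheel is created, and track degrees and twins'') is not carried out, and it is genuinely delicate: gadget replacements can create twins or low-degree vertices that do not pull back to $G$, which is why the paper instead works directly with ends and end blocks (with marker edges) and never needs induction or a minimal counterexample. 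Second, your plan of ``deriving a contradiction from $(\star)$'' cannot succeed generically, because there is a real exceptional outcome: $K_{4,4}$ is 4-wheel-free, $4$-regular and $4$-connected, so in the relevant cases the only possible conclusion is forced twins, and pinning down exactly when this happens amounts to proving the paper's Theorem~\ref{4connexe} (every $4$-connected almost 4-wheel-free graph is $K_{4,4}$); similarly, in the connectivity-$2$ case an end block can itself be $K_{4,4}$, with no vertex of degree at most $3$ inside the end, and your sketch does not identify or handle this case.

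For comparison, the paper's proof splits by connectivity: for $\kappa\ge 4$ it proves the $K_{4,4}$ characterization via Watkins--Mesner certificates and a maximality argument over the components they cut off; for $\kappa=3$ it shows every end avoiding centers of 4-wheels is trivial, by a long rerouting argument in the end block using the Fan Lemma and Perfect's lemma to control how many marker edges the rim of a wheel uses; for $\kappa=2$ it shows an end either contains a degree-$3$ vertex or its end block is $K_{4,4}$ (yielding twins); and for $\kappa\le 1$ it simply applies the previous cases to an end block. None of this machinery, nor any substitute for it, appears in your proposal, so as it stands there is a genuine gap rather than a different but complete route.
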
  

Which have the following easy corollary.

\begin{corollary} \label{maincolor}
If $G$ is a 4-wheel-free graph, then $G$ is 4-colorable.
\end{corollary}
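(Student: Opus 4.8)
The final statement, Corollary~\ref{maincolor}, reduces in two lines to Theorem~\ref{main} (I give that reduction at the end), so the plan below is really a plan for Theorem~\ref{main}. Suppose Theorem~\ref{main} fails and let $G$ be a counterexample with $|V(G)|$ minimum: so $G$ is 4-wheel-free, has no pair of twins, and $\delta(G)\ge 4$. By Theorem~\ref{thmturner} some vertex $v$ satisfies $\deg(v)\le 4$, hence $\deg(v)=4$; write $N(v)=\{a,b,c,d\}$. The lever used everywhere is that, since $(v,C)$ is never a 4-wheel, \emph{no cycle of $G-v$ passes through all four of $a,b,c,d$}. The strategy is (a) to reduce, via cut lemmas, to the case where $G$ is highly connected, showing along the way that whenever $G$ has a small separator some separated piece forces a pair of twins of $G$ or a vertex of $G$ of degree $\le 3$; and (b) to finish the highly connected case from the non-co-cyclicity of $a,b,c,d$.

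For the reductions I would take a minimal separator $S$ of $G$ with $k=|S|$ small (start with $k=0$: a component of $G$ is a smaller counterexample; then $k=1,2,3,\dots$), let $A$ be one component of $G-S$ and $B$ the union of the others, form $G_A=G[A\cup S]$ and $G_B=G[B\cup S]$, and glue to the copy of $S$ in each of them a small gadget on a few new vertices. The gadget must be designed so that in both pieces: every vertex of $S$ again has degree $\ge 4$; the connectivity the removed side provided across $S$ is simulated faithfully enough that no \emph{new} cycle — hence no new 4-wheel — is introduced; and no pair of twins is created. Then $G_A$ and $G_B$ are smaller 4-wheel-free graphs, so by minimality each contains a pair of twins or a vertex of degree $\le 3$; choosing this feature in the interior $A$ (resp.\ $B$), where neighbourhoods agree with those in $G$, it survives in $G$, contradicting the choice of $G$. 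Producing correct gadgets for $k=2,3,4$ and verifying the three requirements is the first, most laborious, part of the work.

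If the reductions clear all separators of size $\le 4$, then $G$ is $5$-connected, so $\delta(G)\ge 5$, contradicting Theorem~\ref{thmturner} and we are done; more realistically the gadget for a $4$-cut cannot be made to work in general, leaving $G$ merely $4$-connected, so $G-v$ is $3$-connected. In a $3$-connected graph any three vertices lie on a common cycle (the classical theorem on cycles through prescribed vertices), hence any three of $a,b,c,d$ are co-cyclic in $G-v$ while all four are not. At this point one needs a structural analysis of the obstructions to four prescribed vertices of a ($3$-connected) graph lying on a common cycle — the four-terminal analogue of the Watkins--Mesner theorem for three terminals. Each such obstruction is built around a separator of size $\le 3$ that splits $\{a,b,c,d\}$ among too many pieces, or a nested version of this; in every case, using $\delta(G)\ge 4$ and that $G$ has no 4-wheel, the small separated pieces yield a vertex of $G$ of degree $\le 3$, a pair of twins of $G$, or enough disjoint paths to route a cycle through four neighbours of some other vertex — a 4-wheel. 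Every outcome contradicts the choice of $G$, proving Theorem~\ref{main}.

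Corollary~\ref{maincolor} follows by induction on $|V(G)|$. If $G$ has twins $u,u'$, properly $4$-colour $G-u$ (still 4-wheel-free) and give $u$ the colour of $u'$; this is proper since $u\not\sim u'$ and $N(u)=N(u')$. Otherwise, by Theorem~\ref{main}, $G$ has a vertex $w$ with $\deg(w)\le 3$; $4$-colour $G-w$ and extend to $w$, which has at most three coloured neighbours. The main obstacle is two-fold: the small-cut reductions, where the inserted gadgets must provably never create a 4-wheel or a pair of twins (a delicate analysis of how cycles and neighbourhoods meet a separator), and the endgame classification of the ways four vertices can fail to be co-cyclic, together with the extraction from each configuration of a forbidden substructure.
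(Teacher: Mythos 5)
Your closing paragraph is exactly the paper's proof of Corollary~\ref{maincolor}: induction on $|V(G)|$, deleting one twin and reusing its partner's colour, or deleting a vertex of degree at most $3$ and extending greedily. As a derivation of the corollary from Theorem~\ref{main}, that part is correct and identical to the paper.

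The rest of your proposal, however, is offered as a plan for Theorem~\ref{main} itself, and there it has genuine gaps. The reduction step is not a proof: you never construct the gadgets for cuts of size $2$, $3$, $4$, you do not verify that gluing them creates no 4-wheel and no twins (nor that a twin pair or low-degree vertex found in the smaller piece can always be taken in the interior rather than in $S$ or the gadget), and you yourself concede the $4$-cut gadget ``cannot be made to work in general.'' Likewise the endgame is only asserted: the ``four-terminal analogue of Watkins--Mesner'' you ask for does exist (it is Theorem~\ref{watkins}, already stated for $k$ terminals in a $(k-1)$-connected graph), but extracting from it a low-degree vertex, a twin pair, or a 4-wheel is precisely the hard content, and your sketch does not engage with the exceptional outcome $K_{4,4}$ (4-connected, 4-wheel-free, minimum degree $4$, saved only by its twins). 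The paper's route is structurally different and avoids both problems: there is no minimal counterexample and no gluing. It proves (Theorem~\ref{4connexe}) that the only 4-connected ``almost 4-wheel-free'' graph is $K_{4,4}$, via a maximality argument on Watkins--Mesner certificates; then, for connectivity $3$, $2$ and $1$, it works inside end blocks (Property~\ref{endBlock}), using the observation that a 4-wheel in an end block whose rim uses at most one marker edge lifts to a 4-wheel of $G$, to show the relevant end is trivial or the end block is $K_{4,4}$ (Theorems~\ref{3connexe} and~\ref{2connexe}). If you want to salvage your plan, the honest statement is that steps (a) and (b) are open problems in your write-up, whereas the paper replaces them by the end-block analysis.
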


\begin{proof}
  We proceed by induction on the number of vertices of a 4-wheel-free graph $G$.  If
  $|V(G)| = 1$, then it is 4-colorable.  Otherwise, by Theorem \ref{main},
  either $G$ contains a vertex $w$ of degree at most~3, or a pair $\{u, v\}$
  of twins.  In the first case, we color $G\sm\{w\}$ by the induction
  hypothesis, and give to $w$ one of the four colors not used in its
  neighborhood.  In the second case,  we color $G\sm\{u\}$ by the
  induction hypothesis, and give to $u$ the same color as $v$.
\end{proof}

We conjecture the following : 

\begin{conjecture} \label{conj}
If $G$ is a $k$-wheel-free graph, then it is k-colorable.
\end{conjecture}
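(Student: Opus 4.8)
The plan is to argue by contradiction through a vertex-minimal counterexample. Suppose $G$ is $4$-wheel-free, has no pair of twins, satisfies $\delta(G)\ge 4$, and has as few vertices as possible subject to these three conditions; the goal is to reach a contradiction.

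First I would record the local consequences of excluding $4$-wheels. By definition, for every vertex $v$ and every cycle $C$ of $G$ with $v\notin V(C)$, at most three vertices of $C$ are neighbours of $v$. In particular $G[N(v)]$ contains no cycle of length at least $4$, so every block of $G[N(v)]$ is an edge or a triangle; hence $G[N(v)]$ is chordal, $K_4$-free, and, since every component of it lies in the same class and its blocks are cliques of size at most $3$, it contains a vertex $s$ with at most two neighbours inside $N(v)$. Consequently $s$ has at most three neighbours in $N[v]$, and since $s$ has degree at least $4$ it has at least one neighbour outside $N[v]$. The engine of the proof is this ``no vertex has four neighbours on a cycle avoiding it'' principle, together with its path-closure variants (a path of $G-v$ can frequently be closed through $v$, or through a common neighbour, into a forbidden cycle): any cycle of $G$ carrying four neighbours of a vertex not on it is, by definition, a $4$-wheel, so all of the structure in the argument comes from forbidding such configurations.

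Before the main analysis I would clean up the connectivity of $G$. Minimality forces $G$ to be connected. For cut vertices, and more generally for small cutsets, I would use a substitution argument: keep one side of the cutset, replace the other side by a small gadget that restores the minimum-degree condition on the cutset, and check that the result is still $4$-wheel-free and twin-free with fewer vertices, contradicting minimality. The three hypotheses leave little room for such a gadget, so I expect this step to be fussier than it first appears, but at least some nontrivial connectivity can be assumed for what follows.

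The core of the proof is a local analysis around a vertex $v$ of minimum degree. Put $A=N(v)$, so $|A|\ge 4$, and study the components $D$ of $G\setminus N[v]$. Since $G[A]$ is a ``cactus'' of edges and triangles, it contains a vertex $u\in A$ with at most two neighbours inside $A$ (a non-cut vertex of a leaf block, or an isolated vertex); by the degree bound $u$ then has a neighbour in some component $D$, and chasing short paths from such neighbours through $D$ back into $A$ and closing them up should produce a cycle carrying four neighbours of $v$ — or of $u$, or of some third vertex — hence a $4$-wheel. The hard part will be exactly this case analysis. In the $3$-wheel-free setting of Theorem~\ref{thomassenToft} the graph $G[N(v)]$ is a \emph{forest}, so one instantly finds a vertex of internal degree at most $1$ with two ``escaping'' neighbours and the argument is short; here $G[N(v)]$ can be a genuinely rich graph, the interaction between $A$ and a component $D$ can be intricate, and ruling out every configuration that does not already display a $4$-wheel demands a long, multi-case analysis. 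It is in disposing of the last stubborn configurations that the ``no twins'' hypothesis is finally used: such a configuration is forced to contain two non-adjacent vertices with the same neighbourhood, i.e.\ a pair of twins, contradicting the choice of $G$.
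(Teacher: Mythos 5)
You are attempting to prove a statement that the paper itself leaves open: it is stated as a conjecture, and the paper proves only the case $k=4$ (Corollary \ref{maincolor}, via the structural Theorem \ref{main}) together with the weaker $(k+1)$-colorability for general $k$. Your sketch, as written, is set up only for $k=4$ (minimal counterexample with no twins and $\delta(G)\ge 4$), so at best it could re-derive the known case; nothing in it addresses $k\ge 5$, where even the analogous ``twins or a vertex of degree at most $k-1$'' statement is not established (the paper only remarks that the $k$-connected case generalizes, giving $K_{k,k}$).

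Even restricted to $k=4$, what you give is a programme with the decisive steps missing, not a proof. First, the connectivity clean-up by ``replacing one side of a small cutset by a gadget'' is not carried out, and it is far from clear that any gadget simultaneously preserves 4-wheel-freeness and twin-freeness, keeps minimum degree at least 4, and strictly decreases the number of vertices; the paper sidesteps this entirely by working, without induction, inside end blocks of the graph. Second, the core of your argument (``chasing short paths through a component $D$ back into $A$ should produce a cycle carrying four neighbours'') is precisely the hard part and is explicitly left as an unperformed multi-case analysis. Third, you never identify where the no-twins hypothesis actually bites, i.e.\ which configuration forces a pair of twins. In the paper this is a precise structural fact: a 4-connected (almost) 4-wheel-free graph is $K_{4,4}$ (Theorem \ref{4connexe}), proved using the Watkins--Mesner theorem (Theorem \ref{watkins}) on $k$ vertices avoided by every cycle and Perfect's fan lemma (Lemma \ref{perfect}); the connectivity-3 and connectivity-2 cases are then reduced to end blocks (Theorem \ref{3connexe}, Theorem \ref{2connexe}), and $K_{4,4}$ is the unique source of twins. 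None of these tools, nor substitutes for them, appear in your outline, so the argument has a genuine gap at exactly the points where the real work lies.
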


Note that concerning the coloring, Corollary \ref{maincolor} and Conjecture \ref{conj} are tight since for any $k \ge 4$, $K_k$ is a $k$-wheel-free graph. Another $k$-wheel-free graph ($k \ge 4$) of chromatic number $k$ can be built as follows : take two disjoint copies $H_1$ and $H_2$ of $K_{k-2}$, add a vertex $x$ complete to $H_1$ and $H_2$ and two adjacent vertices $a$ and $b$, such that $a$ is complete to $H_1$ and $b$ is complete to $H_2$. It is easy to see that the graph obtained in this way is indeed $k$-wheel-free and of chromatic number $k$. 
\\

Regarding this subject, one can also find some extremal result on 3-wheel-free graphs in \cite{thomassenExtremal} and on 4-wheel-free graphs in \cite{horev}.

\section{Terminology and notations}
All graphs in this paper are simple and undirected.
Let $G$ be a graph.
For $S \subseteq V(G)$, $G[S]$ denotes the subgraph
of $G$ induced by $S$, and $G\setminus S=G[V(G)\setminus S]$.
The graph $G$ is called \textit{k-connected} (for $k\in \mathbb N$) if $|V(G)| >k$ and $G \sm S$ is connected for any set $S \subseteq V(G)$ with $|S|<k$. The greatest integer $k$ such that $G$ is k-connected is the $connectivity$ $\kappa(G)$ of $G$.

A {\em path} $P$ is a sequence of distinct vertices $p_1p_2\ldots p_{\ell}$,
$\ell \geq 1$, such that $p_ip_{i+1}$ is an edge for all $1\leq i < \ell$. If $P=p_1p_2\dots p_{\ell}$ is a path, the vertices $p_2, \dots, p_{\ell-1}$ are called the \textit{internal vertices} of $P$.
Saying that $P=P[a,b]$ is a path means that $a$ and $b$ are its extremities. If $\{c,d\} \subseteq V(P)$, then $P[c,d]$ denotes the unique subpath of $P$ with extremities $c$ and $d$. We define $P]c,d]=P[c,d] \sm c$ and we define in a similar way $P[c,d[$ and $P]c,d[$. 
A cycle $C$ is a sequence of vertices $p_1p_2\ldots p_kp_1$, $k \geq 3$,
such that $p_1\ldots p_k$ is a path and $p_1p_k$ is an edge.

Let $G$ be a graph. 
If $F \subseteq V(G)$ and $x \in V(G)$, we denote by $N(x)$ the set of neighbors of $x$, by $N(F)$ the set of vertices from $V(G) \setminus F$ adjacent to at least one vertex of $F$ and we define $N_F(x)=N(x) \cap F$ the neighborhood of $x$ in $F$. 
We denote
by $\overline{F}$ the set $V(G) \setminus (F \cup N(F))$.
We say that $F$ is a \emph{fragment} of $G$ if $|N(F)|=\kappa(G)$ and
$\overline{F} \ne \emptyset$ (note that if $F$ is a fragment of $G$,
then $\overline{F}$ is a fragment too).  An \textit{end} of $G$ is a fragment not
containing any other fragments as a proper subset. It is clear that any
fragment $F$ contains an end, and that consequently all graphs contain
at least two disjoint ends : one in $F$, another one in
$\overline{F}$. Let $F$ be an end of $G$. If $|F|=1$ we say that $F$ is $trivial$. The graph $H$ induced by $F \cup N(F)$ plus all possible edges with both ends in $N(F)$ is called an \textit{end block} of $G$. The edges with both ends in $N(F)$ are called the \textit{marker edges} of $H$.

\section{Preliminaries}

Let $G$ be a graph, $k \ge 1$ an integer, $Y\subseteq V(G)$ a set of
at least $k$ vertices, and $x\in V(G) \sm Y$.  A family $F_x$ of
$k$ paths from $x$ to $Y$ whose only common vertex is $x$ and whose
internal vertices are not in $Y$, is called a \emph{k-fan from $x$
  to Y}. We define $\text{end}(F_x)=F_x \cap Y$.  The next result is classical (see  \cite{bondy.murty:book}).

\begin{lemma}[Fan Lemma]
  \label{fanLemma}
If $G$ is a k-connected graph, $x \in V(G)$ and $Y$ is a
  subset of $V(G) \sm \{x\}$ of cardinality at least $k$,
then there is a k-fan from $x$ to $Y$.
\end{lemma}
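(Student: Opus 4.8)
The plan is to deduce the Fan Lemma from Menger's theorem in its vertex form (itself classical). First I would build an auxiliary graph $G'$ by adding to $G$ a single new vertex $y$ joined to every vertex of $Y$ and to nothing else. Since $Y \subseteq V(G)\sm\{x\}$, the vertices $x$ and $y$ are non-adjacent in $G'$, so Menger's theorem applies in the form: there are $k$ internally disjoint $x$--$y$ paths in $G'$ if and only if no set of fewer than $k$ vertices separates $x$ from $y$ in $G'$.

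Next I would check that separation hypothesis. Suppose for contradiction that some $S \subseteq V(G')\sm\{x,y\}$ with $|S|<k$ separates $x$ from $y$ in $G'$. Note $S \subseteq V(G)$. Because $|Y|\ge k > |S|$, we may pick $y' \in Y\sm S$; as $y'$ is adjacent to $y$ in $G'$ and $y'\notin S$, the vertex $y'$ lies in the component of $y$ in $G'\sm S$, so $S$ separates $x$ from $y'$ in $G'\sm S$, hence also in $G\sm S$ (no $x$--$y'$ path of $G$ uses $y$). But $x \ne y'$ since $x \notin Y$, and $|S| < k \le \kappa(G)$, so $G\sm S$ is connected --- a contradiction. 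Hence Menger yields $k$ internally disjoint $x$--$y$ paths in $G'$; note here we use that the convention "$k$-connected" forces $|V(G)|>k$, which is exactly what makes "$G\sm S$ connected for $|S|<k$" available.

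Finally I would turn these paths into a $k$-fan. For each such path $P$, let $y_P$ be the first vertex of $Y$ met when traversing $P$ from $x$; this exists because the penultimate vertex of $P$ is a neighbor of $y$ in $G'$, hence lies in $Y$. Replace $P$ by the subpath $P[x,y_P]$: it runs from $x$ to $Y$, and by choice of $y_P$ it has no internal vertex in $Y$. Since the original paths pairwise met only in $x$, so do the truncations; and since $y_P$ is an internal vertex of the $x$--$y$ path $P$ (it is neither $x$, as $x\notin Y$, nor $y$, as $y\notin Y$), distinct paths yield distinct endpoints $y_P$. Thus the family $\{P[x,y_P]\}$ is a $k$-fan from $x$ to $Y$.

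I expect the only delicate point to be the bookkeeping in this last step --- verifying that truncating to the first vertex of $Y$ preserves internal disjointness and still produces $k$ distinct endpoints; the rest is a routine application of Menger's theorem together with the defining property of $\kappa(G)$.
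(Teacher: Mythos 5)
Your proof is correct: the paper itself gives no proof of the Fan Lemma (it is cited as classical from Bondy and Murty), and your argument --- adding an auxiliary vertex $y$ joined to all of $Y$, checking via the $k$-connectivity of $G$ that no set of fewer than $k$ vertices separates $x$ from $y$, applying Menger, and truncating each path at its first vertex in $Y$ --- is exactly the standard textbook proof being referenced. The delicate points (that $S\subseteq V(G)$, that $y_P$ exists because the penultimate vertex of each path lies in $Y$, and that truncation preserves the property that the paths meet only in $x$) are all handled correctly.
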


We will also need the following Lemma due to Perfect (see \cite{perfect}).

\begin{lemma} \label{perfect}
Let $G$ be a k-connected graph, $k \ge 2$, $k_1 \le k$, $x \in V(G)$, $S \subseteq V(G)$ with $|S|\ge k$ and $F^1_x$ a $k_1$-fan from $x$ to $S$. There exists a k-fan $F^2_x$ such that $\text{end}(F^1_x) \subseteq \text{end}(F^2_x)$.
\end{lemma}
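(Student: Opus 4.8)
The plan is to reduce the statement to a routing (max-flow / Menger) computation in an auxiliary digraph, using the Fan Lemma (Lemma~\ref{fanLemma}) only as a black box to bound the size of a minimum cut. Since $F^1_x$ is a fan from $x$ to $S$ we have $x\notin S$, and if $k_1=k$ we may take $F^2_x=F^1_x$; so assume $k_1<k$ and write $T=\text{end}(F^1_x)$, $|T|=k_1$. First I would build a digraph $\vec{G}$ encoding ``fans from $x$ to $S$'': split every vertex $v\in V(G)\sm\{x\}$ into $v_{in}\to v_{out}$ with unit capacity, turn each edge $uv$ of $G$ into the two arcs $u_{out}\to v_{in}$ and $v_{out}\to u_{in}$, keep $x$ as a single source with arcs $x\to v_{in}$ for each neighbour $v$ of $x$, add a sink $s^*$ with a unit-capacity arc $v_{out}\to s^*$ for every $v\in S$, and --- crucially --- delete every arc leaving $v_{out}$ for $v\in S$ except $v_{out}\to s^*$. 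Because of this last deletion a directed path from $x$ to $s^*$ in $\vec{G}$ cannot run \emph{through} a vertex of $S$, so such paths are exactly the paths of fans from $x$ to $S$ in $G$, and $k$ internally vertex-disjoint ones are exactly a $k$-fan from $x$ to $S$. Consequently $F^1_x$ gives an integral flow $f_0$ of value $k_1$ with $f_0(t_{out}\to s^*)=1$ for all $t\in T$, and since $G$ is $k$-connected with $|S|\ge k$, Lemma~\ref{fanLemma} supplies a $k$-fan from $x$ to $S$, hence $k$ disjoint $x$-to-$s^*$ paths; by max-flow min-cut, every $x$-to-$s^*$ cut of $\vec{G}$ has capacity at least $k$.

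Next I would run a restricted augmenting-path argument starting from $f_0$: while the current integral flow $f$ (which always satisfies $f(t_{out}\to s^*)=1$ for all $t\in T$) has value $<k$, look for an augmenting $x$-to-$s^*$ path in the residual digraph $\vec{G}_f$ after deleting the residual arcs $s^*\to t_{out}$ with $t\in T$. Such a path must exist, for otherwise, letting $X$ be the vertices reachable from $x$ in that reduced residual digraph, the standard cut computation applies unchanged (the deleted residual arcs leave $s^*\notin X$, so they play no role): every arc of $\vec{G}$ leaving $X$ is saturated and every arc entering $X$ is unused, whence $|f|$ equals the capacity of the cut $\delta^+(X)$, contradicting that this capacity is at least $k>|f|$. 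Augmenting along such a path never uses a residual arc $s^*\to t_{out}$, and $t_{out}\to s^*$ is already at capacity, so the value $1$ on the arcs $t_{out}\to s^*$ survives every step; after finitely many augmentations we obtain an integral flow of value $k$ still saturating all arcs $t_{out}\to s^*$, $t\in T$. A flow decomposition (after discarding any cycles) yields $k$ arc-disjoint, hence internally vertex-disjoint, $x$-to-$s^*$ paths, i.e.\ a $k$-fan $F^2_x$ from $x$ to $S$; each arc $t_{out}\to s^*$ with $t\in T$ carries one unit and therefore lies on one of these paths, so $T=\text{end}(F^1_x)\subseteq\text{end}(F^2_x)$, which is what we want.

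The same argument can be presented purely combinatorially, by induction on $k-k_1$: from a $k_1$-fan with $k_1<k$ one produces, by a single Menger augmentation that is forbidden to ``give back'' any endpoint of the current fan, a $(k_1+1)$-fan whose endpoint set still contains $T$, and one iterates. Either way, the one genuinely delicate point --- the main obstacle --- is the augmentation step: one must show that forbidding the residual arcs through $T$ does not block the augmentation before the value reaches $k$, and this is precisely where $k$-connectivity together with $|S|\ge k$ is used, namely through the lower bound $k$ on the minimum $x$-to-$s^*$ cut coming from Lemma~\ref{fanLemma}. The separate bookkeeping that the output is an honest fan (internal vertices avoiding $S$) costs nothing, having been built into $\vec{G}$ by deleting the arcs out of $v_{out}$ for $v\in S$.
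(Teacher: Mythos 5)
Your argument is correct, but note that the paper does not prove this lemma at all --- it is quoted from Perfect's 1968 paper --- so there is no internal proof to compare against; what you give is in effect a self-contained modern rendering of Perfect's augmentation argument. Your network formulation is sound: splitting every vertex of $V(G)\sm\{x\}$ with a unit-capacity arc and deleting the arcs leaving $v_{out}$ for $v\in S$ (except the one to $s^*$) does make integral feasible flows correspond exactly to fans from $x$ to $S$ (internal vertices off $S$, paths meeting only in $x$, distinct endpoints), the Fan Lemma gives a value-$k$ flow and hence the lower bound $k$ on every $x$--$s^*$ cut, and the two delicate points are handled correctly: in the cut computation the forbidden residual arcs all leave $s^*$, which lies outside the reachable set $X$, so the identity ``flow value $=$ capacity of $\delta^+(X)$'' is untouched; and since $\vec{G}$ has no arcs leaving $s^*$, no cycle of the final flow decomposition passes through $s^*$, so each saturated arc $t_{out}\to s^*$, $t\in\text{end}(F^1_x)$, really does lie on one of the $k$ decomposition paths. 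Compared with simply invoking Perfect's theorem (as the paper does), your route costs the usual max-flow bookkeeping but buys a self-contained proof from Menger/Fan Lemma primitives, and the purely combinatorial induction on $k-k_1$ you sketch at the end is essentially Perfect's original presentation; either version would serve as a complete proof of the lemma.
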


The following theorem is a classical result due to Dirac (see \cite {dirac} or \cite{bondy.murty:book}).

\begin{theorem} \label{dirac}
In a k-connected graph $G$,
\begin{enumerate}
\item given any $k$ vertices, there is a cycle passing through the $k$ vertices;
\item given any edge and any $k-1$ vertices there is a cycle passing through all of them.
\end{enumerate}
\end{theorem}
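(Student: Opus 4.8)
The plan is to prove both statements by a single rerouting technique, inducting on the number of prescribed vertices and using the Fan Lemma (Lemma~\ref{fanLemma}) to graft a new vertex onto an already-constructed cycle; throughout I assume $k\ge 2$, which is the only meaningful range. The basic operation is the following surgery. Suppose $C$ is a cycle, $v\notin V(C)$, and $F$ is a $p$-fan from $v$ to $V(C)$ with $p=\min(k,|V(C)|)$, obtained from Lemma~\ref{fanLemma} (legitimate since $G$ is $p$-connected and $|V(C)|\ge p$). Its $p$ endpoints lie in cyclic order on $C$ and split $C$ into $p$ arcs; deleting the \emph{interior} of one arc and replacing it by the two fan-paths $z_i\to v\to z_{i+1}$ yields a new cycle through $v$ that drops only the interior vertices of the chosen arc. (The Fan Lemma alone suffices for this; Perfect's Lemma, Lemma~\ref{perfect}, is not needed.)

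For part~(1) I induct on $s$, $2\le s\le k$, showing any $s$ vertices lie on a common cycle. The base $s=2$ is the standard fact that a $2$-connected graph has two internally disjoint paths between any two vertices, whose union is a cycle through both. For the step, take (by induction) a cycle $C$ through $v_1,\dots,v_{s-1}$; if $v_s\in V(C)$ we are done, otherwise apply the surgery with $v=v_s$. Call an arc \emph{forbidden} if its interior contains some $v_j$. Each $v_j$ lies in the interior of at most one arc, so at most $s-1$ arcs are forbidden. If $p=k$ this leaves $k-(s-1)\ge 1$ usable arc; if $p=|V(C)|<k$ then every vertex of $C$ is a fan endpoint, so no arc has an interior and every arc is usable. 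Rerouting through a usable arc inserts $v_s$ and keeps $v_1,\dots,v_{s-1}$.

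For part~(2) I induct on $t$, $0\le t\le k-1$, showing there is a cycle through a prescribed edge $xy$ together with $t$ prescribed vertices $v_1,\dots,v_t$. The base $t=0$ uses that a $2$-connected graph has no bridge, so $G\setminus\{xy\}$ is connected and an $x$--$y$ path in it, closed by $xy$, is the desired cycle. The step mirrors part~(1): from a cycle $C$ through $xy$ and $v_1,\dots,v_{t-1}$, apply the surgery with $v=v_t$, where now an arc is forbidden if its interior meets $\{v_1,\dots,v_{t-1}\}$ \emph{or} if the edge $xy$ is one of its edges.

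The step I expect to be the main obstacle is exactly this counting, and its resolution is the crux of the whole argument. One is tempted to treat $x$ and $y$ as two obstacle vertices that must be preserved, which would forbid too many arcs and make the pigeonhole fail as $t\to k-1$. The correct observation is that the edge $xy$ lies in \emph{exactly one} arc, so it forbids only a single arc; together with the at most $t-1$ arcs forbidden by $v_1,\dots,v_{t-1}$, at most $t\le k-1$ of the $p$ arcs are forbidden. Hence if $p=k$ at least one usable arc remains, and if $p=|V(C)|<k$ all arcs have empty interior so only the $xy$-arc is forbidden, again leaving a usable arc. Preserving the edge is then automatic: if a removed arc had $x$ or $y$ in its interior it would contain the edge $xy$ and so be forbidden. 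This identity --- that protecting an edge costs one arc rather than two vertices --- is what lets the induction run all the way to $t=k-1$, and verifying it carefully (including the degenerate short-cycle case) is the delicate point.
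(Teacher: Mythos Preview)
The paper does not give a proof of Theorem~\ref{dirac}; it is quoted as a classical result of Dirac with references to \cite{dirac} and \cite{bondy.murty:book}. So there is no in-paper argument to compare against.

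Your argument is the standard textbook proof (essentially the one in Bondy--Murty): induct on the number of prescribed vertices, use the Fan Lemma to obtain $p=\min(k,|V(C)|)$ paths from the new vertex to the current cycle, and reroute through an arc whose interior avoids all previously placed objects. The pigeonhole counts are correct in both parts; in particular your key observation for part~(2), that the edge $xy$ occupies exactly one arc and hence costs one forbidden arc rather than two, is precisely what makes the induction reach $t=k-1$. The degenerate case $p=|V(C)|<k$ is also handled correctly, since then every arc has empty interior and only the arc consisting of the edge $xy$ itself is forbidden. Your proof is correct and complete for $k\ge 2$.
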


In \cite{watkins}, Watkins and Mesner describe the structure around a set of $k$ vertices in a $(k-1)$-connected graph ($k \ge 4$) that are not contained in a cycle. This theorem is of  great use for the study of $k$-wheel-free graphs.

\begin{theorem} [see \cite{watkins}] \label{watkins} 
Let $G$ be a graph with $\kappa(G)=k-1 \ge 3$. If $X=\{x_1, \dots, x_k\} \subseteq V(G)$ has the property that no cycle of $G$ goes through all the vertices of $X$ then $G$ admits a cutset $S=\{s_1, \dots, s_{k-1}\}$ such that $S \cap X=\emptyset$ and $G\sm S$ has $k$ connected components $C_1, \dots, C_k$ such that $x_i \in C_i$ for $i= 1, \dots, k$.
\end{theorem}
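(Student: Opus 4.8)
The plan is to produce the separator $S$ from an explicit wheel-like subgraph built around the ``missing'' vertex, and then to show that the hypothesis ``no cycle passes through all of $X$'' forces this subgraph to be a genuine separation of $G$. First I would fix the target cycle: since $\kappa(G)=k-1$, the first part of Theorem~\ref{dirac} guarantees that any $k-1$ vertices of $X$ lie on a common cycle, while by hypothesis no cycle meets all $k$; hence the maximum number of vertices of $X$ on a cycle is exactly $k-1$. Fix a cycle $C$ through $k-1$ of them, say $x_1,\dots,x_{k-1}$, with $x_k\notin V(C)$. Applying the Fan Lemma (Lemma~\ref{fanLemma}) to $x_k$ and $Y=V(C)$ (note $|V(C)|\ge k-1$) yields a $(k-1)$-fan $\mathcal F$ from $x_k$ to $C$; its feet $f_1,\dots,f_{k-1}$ are $k-1$ distinct vertices of $C$, and since the internal vertices of the fan avoid $Y=V(C)$, the configuration $W=C\cup\mathcal F$ is a subdivided wheel with rim $C$, center $x_k$ and $k-1$ spokes.

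The key preliminary step is an \emph{alternation} property: on $C$ the feet $f_1,\dots,f_{k-1}$ and the vertices $x_1,\dots,x_{k-1}$ alternate, with exactly one $x_i$ strictly inside each of the $k-1$ arcs cut out by consecutive feet. Indeed, if some arc between consecutive feet $f_i,f_{i+1}$ contained no vertex of $X$, I would reroute: traverse $C$ from $f_i$ to $f_{i+1}$ the long way (thus covering every $x_1,\dots,x_{k-1}$), then return $f_{i+1}\to x_k\to f_i$ along the two corresponding spokes. As the spoke interiors avoid $C$, this is a cycle through all of $X$, contradicting the hypothesis. Since there are $k-1$ arcs and $k-1$ vertices of $X$ on $C$, each arc receives exactly one, and the same splicing idea excludes a foot from landing on a vertex of $X$, so $S\cap X=\emptyset$. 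Lemma~\ref{perfect} is convenient here to enlarge a partial fan landing on prescribed attachments into a full $(k-1)$-fan retaining those feet, which helps match the feet to the attachments of the bridge containing $x_k$.

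I would then set $S=\{f_1,\dots,f_{k-1}\}$, so that $|S|=k-1=\kappa(G)$ and $S\cap X=\emptyset$, and the goal becomes showing that $G\sm S$ has exactly $k$ components: the $k-1$ open arcs (one $x_i$ each, $i<k$) together with the ``center side'' containing $x_k$. This is the heart of the argument and the step I expect to be the main obstacle. I must prove that \emph{no crossing connection survives} in $G\sm S$: no path joining two distinct arcs, no path joining the center side to an arc except through a foot, and no bridge of $C$ attached only within $S$. Each such forbidden object would let me merge two rim regions and still collect the center via two suitably chosen spokes, producing a cycle through all of $X$ exactly as in the alternation step; the delicate part is that \emph{which} two spokes to use depends on where the crossing path lands, and carrying out this bookkeeping uniformly in $k$ is where the real work lies. (Even in the smallest case $k=4$ one already sees that a chord between the arc of $x_1$ and the arc of $x_3$ forces a cycle through $x_1,x_2,x_3,x_4$ once the spokes are chosen correctly.)

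Connectivity does part of the job automatically: any bridge with an interior vertex must meet $C$ in at least $k-1$ vertices, since otherwise its attachment set would be a cutset of size smaller than $\kappa(G)$, which is impossible; the remaining degenerate bridges are chords and are handled directly as crossing connections. Once the rerouting principle rules out every crossing connection, each arc together with the bridges hanging off it is connected and contains a single $x_i$, the center side (the bridge containing $x_k$, whose attachments are then exactly the feet) is a single component containing $x_k$, and the absence of bridges attached only within $S$ forbids stray components. Hence $G\sm S$ splits into precisely the required $k$ components $C_1,\dots,C_k$ with $x_i\in C_i$, completing the argument.
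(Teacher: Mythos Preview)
The paper does not give its own proof of this statement. Theorem~\ref{watkins} is quoted from Watkins and Mesner~\cite{watkins}; the only commentary in the paper is the sentence following the statement, noting that the formulation differs slightly from the original but that the original argument establishes exactly this version. There is therefore nothing in the paper to compare your proposal against.

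As for the proposal itself: your outline follows the spirit of the Watkins--Mesner argument (build a subdivided wheel on a cycle through $k-1$ of the points, use the feet of a $(k-1)$-fan from the missing point as the separator, and argue by rerouting that any connection across the candidate components would yield a forbidden cycle through all of $X$). The alternation step is correct as you wrote it. The part you flag as ``the main obstacle'' is indeed the real gap: with an \emph{arbitrary} choice of $C$ and of the fan, a crossing path between two arcs, or an extra attachment of the bridge of $x_k$, does not automatically yield a cycle through all of $X$ by a single uniform splicing; one typically needs an extremal choice of the configuration (minimising the wheel, or maximising the number of $x_i$ captured subject to constraints) to force every surviving bridge to behave. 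Your write-up acknowledges this but does not supply the extremal choice or the full case analysis, so as it stands it is a plan rather than a proof. If you intend to include a self-contained proof, that extremal step is where the work has to go; otherwise, citing~\cite{watkins} as the paper does is appropriate.
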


Note that in \cite{watkins} Watkins and Mesner do not present their result in the same fashion as we do here, but they  prove exactly the version given here.


Let us finish this section by an easy and essential property on the end blocks of a graph.

\begin{property} \label{endBlock}
If $G$ is a graph of connectivity k and $F$ is a non trivial end of $G$, then the end block $G_F$ containing $F$ is (k+1)-connected.
\end{property}

\begin{proof}
Let $G$ be a graph of connectivity $k$, $F$ a non trivial end of $G$ and $G_F$ the end block of $G$ containing $F$. By way of contradiction suppose $E$ is a fragment of $H$ such that $|N(E)|=k$. Since $N(F)$ induces a clique, we may assume w.l.o.g.\ that $N(F) \subseteq E \cup N(E)$. Thus $\overline E$ is a fragment of $G$ properly included in $F$, a contradiction. 
\end{proof}

\section{Main results} \label{mainResults}

This section is subdivided into four subsections. The first three  subsections are dealing  with 4-wheel-free graphs of connectivity 4, 3 and 2 respectively, and the last subsection combined the obtained results to prove the announced statement on 4-wheel-free graphs.

We denote by $W(G)$ the set of vertices that are center of a 4-wheel in $G$.

\subsection{4-connected 4-wheel-free graphs} \label{subsection4}

We say that a graph $G$ is \textit{almost 4-wheel-free} if $|W(G)|\le 3$ and $W(G)$ induces a clique.
Almost 4-wheel-free graph are, in general, a super class of 4-wheel-free graphs, but we needed the result (see Theorem \ref{4connexe})  on almost 4-wheel-free graph in order to study the case of 4-wheel-free graphs of connectivity 3.

We first need two easy lemmas.

\begin{lemma} \label{5connexe}
If $G$ is a 5-connected graph, then $W(G)=V(G)$
\end{lemma}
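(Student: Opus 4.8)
The plan is to take an arbitrary vertex $x \in V(G)$ and exhibit a cycle $C$ in $G \sm \{x\}$ such that $x$ has at least four neighbors on $C$; this shows $x \in W(G)$, and since $x$ was arbitrary, $W(G) = V(G)$. Since $G$ is $5$-connected, $x$ has degree at least $5$, so we may pick five distinct neighbors $y_1, \dots, y_5$ of $x$. The graph $G \sm \{x\}$ is $4$-connected, so in particular it is $3$-connected, and by Theorem~\ref{dirac}(1) there is a cycle in $G \sm \{x\}$ passing through any three prescribed vertices — but I want four neighbors of $x$ on a single cycle avoiding $x$, so a direct appeal to Dirac on three of the $y_i$ only guarantees three spokes.

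To get the fourth spoke, the idea is to work inside $G \sm \{x\}$, which is $4$-connected, and apply Theorem~\ref{dirac}(1) to the four vertices $y_1, y_2, y_3, y_4 \in V(G \sm \{x\})$: since $G \sm \{x\}$ is $4$-connected, there is a cycle $C$ of $G \sm \{x\}$ through all of $y_1, y_2, y_3, y_4$. Then $C$ is a cycle of $G$ not containing $x$, and $x$ has at least four neighbors ($y_1, y_2, y_3, y_4$) on $C$, so $(x, C)$ is a $4$-wheel with center $x$. Hence $x \in W(G)$. As $x$ was an arbitrary vertex of $G$, we conclude $W(G) = V(G)$.

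The only point requiring a little care is the connectivity bookkeeping: we must check that $G \sm \{x\}$ is genuinely $4$-connected (so that Dirac's theorem applies to four vertices), which follows because deleting a single vertex from a $5$-connected graph yields a $4$-connected graph, and that $|V(G \sm \{x\})| > 4$, which is immediate since a $5$-connected graph has at least six vertices. I do not expect any real obstacle here; the statement is essentially an immediate consequence of Dirac's theorem applied after deleting the prospective center.
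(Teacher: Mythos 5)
Your proof is correct and follows essentially the same route as the paper: delete the prospective center $x$, observe that $G \sm \{x\}$ is $4$-connected, and apply Theorem~\ref{dirac}(1) to four neighbors of $x$ to obtain the rim of a $4$-wheel centered at $x$. The extra connectivity bookkeeping you mention is exactly the (implicit) justification in the paper's one-line argument.
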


\begin{proof}
Let $x$ be a vertex of $G$ and let $\{x_1, \dots, x_4\}$ be four neighbors of~ $x$. By Theorem \ref{dirac}, there is a cycle $C$ in $G \sm \{x\}$ going through $\{x_1, \dots, x_4\}$ and thus $(x,C)$ is a 4-wheel of $G$.
\end{proof}

\begin{lemma} \label{triangle}
If $G$ is a 4-connected graph, then every vertex contained in a triangle is in $W(G)$.
\end{lemma}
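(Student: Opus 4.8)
The plan is to take a vertex $x$ lying in a triangle $xyz$ of the $4$-connected graph $G$ and exhibit a $4$-wheel centered at $x$. Since $x$, $y$, $z$ form a triangle, $y$ and $z$ are two neighbors of $x$; to build a $4$-wheel I need a cycle through $x$ containing at least four neighbors of $x$, equivalently (deleting $x$) a path or cycle in $G\sm x$ through at least four vertices of $N(x)$ in which consecutive ones can be joined around through $x$. The natural idea is to find, in $G\sm x$, a cycle $C$ that passes through $y$, $z$ and at least two further neighbors of $x$; then $(x,C)$ is a $4$-wheel.

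First I would use the fact that $G$ is $4$-connected, so $|N(x)|\ge 4$; pick two neighbors $a,b$ of $x$ distinct from $y$ and $z$ (if $N(x)=\{y,z,a,b\}$ there are exactly these; otherwise choose any two). Now work in $G\sm x$: since $G$ is $4$-connected, $G\sm x$ is $3$-connected, and in particular $2$-connected, and the edge $yz$ survives in $G\sm x$. By Theorem \ref{dirac}(ii) applied in $G\sm x$ (which is $3$-connected), given the edge $yz$ and the two vertices $a,b$, there is a cycle $C$ of $G\sm x$ passing through the edge $yz$ and through $a$ and $b$. Then $C$ contains the four distinct neighbors $y,z,a,b$ of $x$, so $(x,C)$ is a $4$-wheel of $G$ and $x\in W(G)$.

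The one point that needs care — and the main (minor) obstacle — is the degenerate case where $N(x)$ does not contain four vertices other than needed, i.e.\ when $\deg(x)=4$ and $N(x)=\{y,z,a,b\}$: this is fine, the argument above only ever uses four specified neighbors. A second subtlety is that Theorem \ref{dirac}(ii) requires $3$-connectivity of the ambient graph, and we are applying it to $G\sm x$; this is where $4$-connectivity of $G$ is used, since deleting one vertex drops connectivity by at most one. (Alternatively one could invoke Theorem \ref{dirac}(i) in $G\sm x$ on the four vertices $y,z,a,b$ directly, not even needing the edge $yz$, which makes the triangle hypothesis look almost superfluous — but phrasing it via the edge $yz$ makes transparent why the statement is about triangles and mirrors how the lemma will be used later.) No further calculation is required.
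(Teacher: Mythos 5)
Your main argument is correct and is essentially identical to the paper's proof: pick two neighbors $a,b$ of $x$ outside the triangle and apply Theorem~\ref{dirac}(2) in the 3-connected graph $G\sm\{x\}$ to the edge $yz$ and the vertices $a,b$. One caveat: your parenthetical alternative is not valid, since Theorem~\ref{dirac}(1) applied in $G\sm\{x\}$ (only guaranteed to be 3-connected) yields a cycle through three prescribed vertices, not four, so the edge $yz$ --- that is, the triangle hypothesis --- is genuinely needed.
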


\begin{proof}
Let $abc$ be a triangle of $G$ and let $a_1, a_2$ be neighbors of $a$ different from $b$ and $c$. By Theorem \ref{dirac}, there is a cycle $C$ in $G \sm \{a\}$ passing though $\{a_1, a_2\}$ and $bc$.
\end{proof}

Let us now explain how  Theorem \ref{watkins} is to be applied to 4-connected  graphs with no 4-wheels. Let $G$ be a 4-connected graph,  let $x \in V(G) \sm W(G)$ and let $X=\{x_1,x_2,x_3,x_4\} \subseteq N(x)$. Since $x \notin W(G)$ there is no cycle going through all vertices of $X$ in $G \sm \{x\}$. So, by Theorem \ref{watkins} applied to $G \sm \{x\}$ there exists a set $S=\{s_1,s_2,s_3\}$ such that $x_i's$ for $i=1,2,3,4$ are in distinct connected components of $G \sm \{x,s_1,s_2,s_3\}$. We call the set $\{x,s_1,s_2,s_3\}$ a \textit{Watkins-Mesner-certificate} (\textit{WM-certificate} for short) for  $(x,\{x_1,x_2,x_3,x_4\})$ and we call $C(x_1)$, $C(x_2)$, $C(x_3)$, $C(x_4)$ the connected components containing respectively $x_1$, $x_2$, $x_3$ and $x_4$.

\begin{theorem} \label{4connexe}
If $G$ is a 4-connected almost 4-wheel-free graph then $G$ is isomorphic to $K_{4,4}$.
\end{theorem}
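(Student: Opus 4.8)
The plan is to show that a $4$-connected almost $4$-wheel-free graph $G$ has a very rigid local structure: almost every vertex lies outside $W(G)$, so Theorem~\ref{watkins} applies to it, and the resulting WM-certificates force $G$ to be triangle-free (by Lemma~\ref{triangle}), bipartite, and then exactly $K_{4,4}$. First I would observe that since $|W(G)|\le 3$, any vertex $x$ of degree $\ge 4$ that is not in $W(G)$ — and by Lemma~\ref{5connexe} there are no $4$-wheels means in particular $G$ is not $5$-connected, so $\kappa(G)=4$ — admits, for any choice of four neighbours $\{x_1,x_2,x_3,x_4\}\subseteq N(x)$, a WM-certificate $\{x,s_1,s_2,s_3\}$ separating the $x_i$ into four distinct components $C(x_1),\dots,C(x_4)$ of $G\sm\{x,s_1,s_2,s_3\}$.

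The key structural step is to analyze such a certificate carefully. Since $G$ is $4$-connected, each component $C(x_i)$ together with $\{x,s_1,s_2,s_3\}$ must be attached by enough edges; in fact each $C(x_i)$ sees all of $x,s_1,s_2,s_3$ (otherwise a vertex of the certificate could be deleted to disconnect $G$, contradicting $4$-connectivity after a short argument). This already shows $x$ has exactly the four neighbours $x_1,x_2,x_3,x_4$, one in each component, so $d(x)=4$ and every vertex outside $W(G)$ has degree exactly $4$. Next I would show that there is no triangle through such an $x$: by Lemma~\ref{triangle} a triangle at $x$ would put $x\in W(G)$. Combining, $N(x)=\{x_1,x_2,x_3,x_4\}$ is an independent set for every $x\notin W(G)$. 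Since $|W(G)|\le 3$ and $W(G)$ is a clique, and since every vertex of $W(G)$ would need degree $\le 4$ as well (a vertex adjacent to $\ge 5$ vertices in a $4$-connected graph, with those neighbours on a common cycle by Theorem~\ref{dirac}, is a $4$-wheel centre — so if $v\in W(G)$ had degree $\ge 5$ we still only know it is a centre, which is allowed; the real bound comes from counting), I would argue that in fact $W(G)=\emptyset$: a vertex $v\in W(G)$ has all its neighbours of degree $4$ and outside $W(G)$ except possibly the $\le 2$ other vertices of $W(G)$, and chasing the certificate structure of those degree-$4$ neighbours forces a contradiction with $v$ being a $4$-wheel centre.

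Once $G$ is genuinely $4$-wheel-free, $4$-connected, triangle-free, and $4$-regular, the finish is short: pick $x$ with neighbourhood $\{x_1,x_2,x_3,x_4\}$ and a WM-certificate $\{x,s_1,s_2,s_3\}$. Each $s_j$ is adjacent to every component $C(x_i)$, so $d(s_j)\ge 4$; since $d(s_j)=4$ exactly, $s_j$ has exactly one neighbour in each $C(x_i)$, and $s_j$ is not adjacent to $x$. Then $x_i$ and $s_j$ together would, were $C(x_i)$ to contain any vertex other than $x_i$, allow building a cycle through $x_1,x_2,x_3,x_4$ avoiding $x$ (using the Fan Lemma, Lemma~\ref{fanLemma}, and Theorem~\ref{dirac} inside the augmented pieces), contradicting $x\notin W(G)$; hence each $C(x_i)=\{x_i\}$ and $V(G)=\{x,s_1,s_2,s_3,x_1,x_2,x_3,x_4\}$ has exactly $8$ vertices. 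With $\{x,x_1,x_2,x_3,x_4\}$ and $\{s_1,s_2,s_3\}$ on the "two sides," triangle-freeness and $4$-regularity pin down all adjacencies: $\{x_1,x_2,x_3,x_4\}$ is independent, $\{s_1,s_2,s_3\}$ is independent, $x\sim x_i$ for all $i$, and each $s_j\sim x_i$ for all $i$; checking degrees forces $x\not\sim s_j$, so the bipartition is $\{x_1,x_2,x_3,x_4\}$ versus $\{x,s_1,s_2,s_3\}$ with all cross edges present, i.e.\ $G\cong K_{4,4}$.

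**The main obstacle** I anticipate is the bookkeeping needed to rule out $W(G)\neq\emptyset$ and, relatedly, to show the components $C(x_i)$ are singletons and that the two "sides" each have exactly four vertices — these rest on repeatedly applying $4$-connectivity (no deletable vertex from a WM-certificate), the Fan and Perfect Lemmas to route paths through the $C(x_i)$, and Dirac's theorem to close cycles through $x_1,\dots,x_4$ inside enlarged pieces; each individual contradiction is easy, but assembling them so that the certificate structure is truly forced (rather than merely consistent with $K_{4,4}$) is where care is required.
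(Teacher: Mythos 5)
Your toolkit is the right one (WM-certificates via Theorem~\ref{watkins}, triangle-freeness via Lemma~\ref{triangle}, $\kappa(G)=4$ via Lemma~\ref{5connexe}), but the three statements that carry all the weight are asserted rather than proved, and each is a genuine gap. First, from ``every component $C(x_i)$ sees all of $\{x,s_1,s_2,s_3\}$'' (which is correct) you conclude that $x$ has exactly one neighbour in each component and none among the $s_j$, hence $d(x)=4$. That is a non sequitur: the certificate argument only gives $d(x)\ge 4$, and nothing at this stage excludes two neighbours of $x$ inside one $C(x_i)$ or an edge $xs_j$ (such an edge need not create a triangle, since $s_j$ need not be adjacent to $x_i$). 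Since the entire endgame runs on $4$-regularity (the count $d(s_j)=4$, the $8$-vertex conclusion), this is not a detail. Second, $W(G)=\emptyset$ is ``proved'' by ``chasing the certificate structure forces a contradiction'', which is a placeholder, not an argument; note that the paper never establishes this and instead carries $W(G)$ along, using only that at most one neighbour of $x$ lies in $W(G)$ (because $W(G)$ is a clique and Lemma~\ref{triangle} would otherwise put $x$ itself in $W(G)$).

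Third and most seriously, the claim that every $C(x_i)$ is a singleton is exactly the hard content of the theorem, and the parenthetical appeal to Lemma~\ref{fanLemma} and Theorem~\ref{dirac} ``inside the augmented pieces'' does not produce the required cycle: a cycle through $x_1,\dots,x_4$ avoiding $x$ must pass between the four components, which communicate only through the three vertices $s_1,s_2,s_3$, so there is no local construction to be had --- indeed the existence of the certificate is precisely the statement that no such cycle exists, and making a non-trivial $C(x_i)$ yield a contradiction is what requires the paper's global extremal argument (choose $x$, $X$ and the certificate maximizing $|C(x_i)|$ over $x_i\notin W(G)$, then nest a second certificate around $x_2$). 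Worse, the paper's preparatory claims point in the \emph{opposite} direction from yours: in a putative counterexample (which contains no $K_{4,4}$ subgraph, by the paper's first claim), a singleton component forces its vertex into $W(G)$; singletons are what get excluded there, while $K_{4,4}$ itself is the unique situation where all components are singletons. Relatedly, you have no analogue of the paper's fan argument ruling out $K_{4,4}$ as a \emph{proper} subgraph of $G$; your $8$-vertex count silently rests on the unproven singleton claim. So while the final bipartition bookkeeping is fine once (a) $W(G)=\emptyset$, (b) $4$-regularity and (c) singleton components are in hand, none of (a)--(c) is actually established, and (c) in particular cannot be obtained by the local argument you sketch.
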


\begin{proof}
Let us argue by way of contradiction and suppose that $G$ is a 4-connected 4-wheel-free graph that is not $K_{4,4}$. By Lemma \ref{5connexe}, $\kappa(G)=4$.

\begin{claim} \label{K44}
$G$ does not contain $K_{4,4}$ as a subgraph.
\end{claim}

\begin{proofclaim}
Suppose by way of contradiction that $G$ contains a subgraph  $H=K_{4,4}$ with partition $\{x_1, x_2,x_3, x_4\}$ and $\{y_1, y_2,y_3, y_4\}$. Since $G \neq H$, we may assume that say $x_1$ has a neighbor $u \notin V(H)$. There is a 3-fan from $u$ to $H \sm x_1$ in $G \sm \{x_1\}$. If a path $P$ connects $u$ with a vertex from $\{x_2,x_3,x_4\}$ in $G \sm \{x_1\}$, then each vertex of $\{y_1, y_2,y_3, y_4\}$ is the center of a 4-wheel in $G$, a contradiction. So the three paths of the 3-fan end in $\{y_1, y_2,y_3, y_4\}$ which implies that each vertex of $\{x_1, x_2,x_3, x_4\}$ are centers of a 4-wheel in $G$, a contradiction.
\end{proofclaim}


\begin{claim} \label{degenerate}
Let $x \in V(G) \sm W(G)$ and let $\{x_1,x_2,x_3,x_4\} \subseteq N(x)$.
If $\{x,s_1,s_2,s_3\}$ is a WM-certificate for $(x,\{x_1,x_2,x_3,x_4\})$ and $C(x_i) = \{x_i\}$ for an $i \in \{1,2,3,4\}$, then $x_i \in W(G)$. 
\end{claim}

\begin{proofclaim}
Suppose w.l.o.g.\ that $C(x_1) = \{x_1\}$ and that $x_1 \notin W(G)$. So $N(x_1)=\{x,s_1,s_2,s_3\}$ and there is a WM-certificate $\{x_1,t_1,t_2,t_3\}$ of $(x_1, \{x,s_1,s_2,s_3\})$.
Now, for $i=2,3,4$, there is a 4-fan from $x_i$ to $\{x,s_1,s_2,s_3\}$ included in $C(x_i) \cup \{x,s_1,s_2,s_3\}$ which implies that $\{x_2,x_3,x_4\}= \{t_1,t_2,t_3\}$.  If $\{x_2,x_3,x_4\} \subseteq N(s_i)$ for $i=1,2,3$, then $G[\{x,s_1,s_2,s_3,x_1,x_2,x_3,x_4\}]$ contains a $K_{4,4}$, a contradiction to   (\ref{K44}). So we may assume that say $s_1$ have a neighbor $u \neq x_2$ in $C(x_2)$. There is a 3-fan from $u$ to $\{x,s_2,s_3\}$ in $G \sm \{s_1\}$ that is included in $C(x_2) \cup \{x,s_2,s_3\}$, so one of the paths of this fan links $s_1$ with a node in $\{x,s_2,s_3\}$ and avoids $\{x_1,x_2,x_3,x_4\}$, a contradiction.  
\end{proofclaim}

Let $x \in V(G) \sm W(G)$, let $X=\{x_1,x_2,x_3,x_4\} \subseteq N(x)$ and $S=\{x,s_1,s_2,s_3\}$ a WM-certificate for $(x,\{x_1,x_2,x_3,x_4\})$ such that $x$, $X$ and $S$ are chosen subject to the maximality of $|C(x_i)|$ where $x_i \notin W(G)$. Assume w.l.o.g.\ that $C(x_1)$ is the one that realizes the maximality.

By Lemma \ref{triangle} at most one neighbor of $x$ is in $W(G)$ so we may assume w.l.o.g.\ that $x_2 \notin W(G)$ and thus, by \ref{degenerate}, $C(x_2) \neq \{x_2\}$.
Let $\{y_1,y_2,y_3\} \subseteq N(x_2) \sm \{x\}$ and let $\{x_2,t_1,t_2,t_3\}$ be a WM-certificate for $(x_2, \{x,y_1,y_2,y_3\})$

Suppose that for some $j \in \{1,2,3\}$, $C(y_j)$ is not included in $C(x_2)$ and therefore contains an $s_i$ ($i \in \{1,2,3\}$). W.l.o.g.\ $s_1 \in C(y_1)$. Since there are three internally disjoint paths linking $s_1$ to $x$ whose interior vertices are included  respectively in $C(x_1)$, $C(x_3)$ and $C(x_4)$, $\{t_1,t_2,t_3\} \subseteq (C(x_1) \cup C(x_3) \cup C(x_4))$. Now, if for an $i \in \{2,3\}$ $C(y_i)$ does not contain $s_2$ nor $s_3$ then $x_2$ is a cutvertex of $G$ separating $C(y_i)$ from the rest of the graph. 
So we may assume w.l.o.g.\ that $s_2 \in C(y_2)$ and $s_3 \in C(y_3)$. Since $|C(x_2)| \ge 2$, either $C(x)$ or one of the $C(y_i)$'s have at least one vertex in $C(x_2)$ and thus, either $\{x,x_2\}$ or $\{s_i,x_2\}$ is a cutset of $G$, a contradiction. 
So, $C(y_i) \subseteq C(x_2)$ for $i=1,2,3$ and therefore $C(x_1) \cup C(x_3) \cup C(x_4) \subseteq C(x)$. This contradicts the maximality of $C(x_1)$.
\end{proof}

Note that, after adapting definitions, the exact same proof works to show that for any $k \ge 5$, the only k-connected $k$-wheel-free graph is $K_{k,k}$.

\subsection{4-wheel-free graphs of connectivity 3} \label{subsection3}

\begin{theorem} \label{3connexe}
If $G$ is a graph with $\kappa (G)=3$ and $F$ is an end of $G$ such that $W(G) \cap F = \emptyset$, then $F$ is trivial.
\end{theorem}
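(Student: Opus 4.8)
Suppose for contradiction that $G$ has $\kappa(G)=3$ and a nontrivial end $F$ with $W(G)\cap F=\emptyset$. Let $G_F$ be the end block containing $F$; by Property~\ref{endBlock}, $G_F$ is 4-connected, and by construction $N(F)=\{s_1,s_2,s_3\}$ induces a triangle in $G_F$ (the marker edges). The key point is that $G_F$ is ``almost 4-wheel-free'': the only vertices that could be centers of a 4-wheel in $G_F$ but not in $G$ are the three vertices $s_1,s_2,s_3$ (since a 4-wheel in $G_F$ using a marker edge that is not an edge of $G$ must have its center in $N(F)$, and its rim meets $F$, so a 4-wheel centered at any $v\in F$ in $G_F$ is a genuine 4-wheel of $G$, contradicting $W(G)\cap F=\emptyset$). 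Hence $W(G_F)\subseteq\{s_1,s_2,s_3\}$, which induces a clique, so $G_F$ is a 4-connected almost 4-wheel-free graph. By Theorem~\ref{4connexe}, $G_F\cong K_{4,4}$.

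**Deriving the contradiction.** Now $G_F\cong K_{4,4}$ cannot contain a triangle, yet $s_1s_2s_3$ is a triangle in $G_F$ — contradiction. So no such nontrivial end exists, and $F$ is trivial.

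**Main obstacle.** The delicate step is the bookkeeping around the marker edges: I must verify carefully that adding the (at most three) marker edges on $N(F)$ cannot create a 4-wheel centered at a vertex of $F$, and cannot enlarge $W$ beyond $N(F)$ in a way that breaks the ``clique'' hypothesis of Theorem~\ref{4connexe}. Since $|N(F)|=3$, any subset of $N(F)$ is automatically a clique, so the clique condition is free; the only real content is the first claim. For that, note that if $(v,C)$ is a 4-wheel of $G_F$ with $v\in F$, then $C$ is a cycle of $G_F$ through $v$ with at least four neighbors of $v$ on it; $C$ may use marker edges, but each marker edge has both ends in $N(F)$, and since $v\in F$ is separated from the rest of $G$ by $N(F)$, one can reroute: if $C$ uses a marker edge $s_is_j$, replace it by a path through $\overline F$ (which exists and avoids $F$ by 3-connectivity of $G$ applied to $\overline F\cup N(F)$, or more simply because $s_i,s_j$ lie in a common cycle avoiding $F$). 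This yields a genuine 4-wheel of $G$ centered at $v\in F$, contradicting $W(G)\cap F=\emptyset$. Once this rerouting lemma is in hand, the rest is immediate from Theorem~\ref{4connexe} and the absence of triangles in $K_{4,4}$.
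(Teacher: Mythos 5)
Your overall strategy is the same as the paper's: pass to the end block $H$ containing $F$, use Property~\ref{endBlock} to get 4-connectivity, argue that every 4-wheel of $H$ centered in $F$ lifts to a 4-wheel of $G$ (so that $W(H)\subseteq N(F)$ and $H$ is almost 4-wheel-free), and then contradict Theorem~\ref{4connexe} via the triangle on $N(F)$. The problem is that the lifting step, which you correctly single out as the main obstacle, is exactly where essentially all of the work in the paper lies, and your ``rerouting lemma'' as stated does not prove it. The rim $C$ of a 4-wheel in $H$ may use \emph{two} marker edges (any two marker edges share a vertex, so they are consecutive on $C$, say $s_1s_2$ and $s_2s_3$). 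Rerouting each of them separately through $\overline F$ requires an $s_1$--$s_2$ path and an $s_2$--$s_3$ path whose interiors lie in $\overline F$ and which are internally disjoint from each other; 3-connectivity of $G$ does not provide this. For instance, if $\overline F$ is a single vertex $u$ adjacent to $s_1,s_2,s_3$, every such path passes through $u$, so the two replacements collide. One can instead replace the pair $s_1s_2,s_2s_3$ by the third marker edge $s_1s_3$ (and then reroute that single marker edge through $\overline F$), but this deletes $s_2$ from the rim, which fails precisely when $xs_2$ is one of the spokes needed to reach four spokes.

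That residual case --- the center $x\in F$ adjacent to the middle vertex $s_2$ of two consecutive marker edges on the rim --- is not a corner case one can wave away: it is the situation the paper spends more than a page on, using Perfect's fan lemma (Lemma~\ref{perfect}) twice and a careful case analysis on where the fan endpoints land, in order to modify the wheel inside $H$ into one whose rim uses at most one marker edge (which then lifts to $G$ as you describe). So your proposal is correct in outline and in its easy cases (zero or one marker edge; two marker edges with $x$ not adjacent to the shared vertex), but it has a genuine gap at the heart of the argument, and closing it appears to require the kind of connectivity/fan argument the paper carries out rather than a direct rerouting through $\overline F$.
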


\begin{proof}
Let $G$ be a graph of connectivity 3, let $F$ be an end of $G$ such that $W(G) \cap F = \emptyset$ and $N(F)=\{a_1,a_2,a_3\}$ and let $H$ be the end block containing $F$. Remind that edges $a_1a_2$, $a_2a_3$ and $a_1a_3$ are called the marker edges of $H$. If $F$ is trivial we are done, so assume that $|F| \ge 2$. By Property~ \ref{endBlock}, $H$ is 4-connected, and since $H$ contains a triangle, $H \neq K_{4,4}$. So by Theorem \ref{4connexe} $H$ admits a 4-wheel $(x,C)$ such that $x \in F$. It suffices now to show that $x$ is also the center of a 4-wheel in $G$ to get a contradiction.

The cycle $C$ has to contain some marker edges.
If $C$ contains exactly  one marker edge we replace this edge by a path of $\overline F \cup N(F)$ to get a 4-wheel centered on $x$ in $G$ (observe that it suffices to find a 4-wheel centered on $x$ in $H$ whose contains at most one marker edge). 
If $x \notin N_F(\{a_1,a_2,a_3\})$ and $C$ contains two marker edges then they are consecutive on $C$ and thus we may replace them by the third marker edge. 

So we may assume that $x$ is a neighbor of say $a_2$ and that $\{a_1a_2,a_2a_3\} \subseteq E(C)$. Put $P=C \sm \{a_2\}$. Let $x_1$, $x_2$, $x_3$ be the three neighbors of $x$ in $P$ and suppose that $a_1$, $x_1$, $x_2$, $x_3$, $a_3$ appear in this order along $P$. 

There is a 1-fan $Q_{a_2-v}$ from $\{a_2\}$ to $P\sm \{a_1,a_3\}$ in $H \sm\{a_1,a_3,x\}$. Let $v$ be the extremity of $Q_{a_2-v}$ different from $a_2$. 

If $v \in V(P[a_1,x_1])$ then $Q_{a_2-v} \cup P[v,a_3] \cup a_3a_2$ is the rim of a 4-wheel centered on $x$ in $H$ that uses exactly one marker edge, a contradiction. So $v \notin V(P[a_1,x_1])$ and symmetrically $v \notin V(P[x_3,x_4])$. Therefore we may assume w.l.o.g.\ that $v \in P]x_1,x_2]$.

Observe that $\{P[x_1,a_1], P[x_1,v]\}$ is a 2-fan  in $H \sm \{x\}$ from $x_1$ to $Q_{a_2-v} \cup P[v,a_3] \cup \{a_1\}$. So, by Lemma \ref{perfect}, there is a 3-fan $F_{x_1}=\{Q_{x_1-a_1},Q_{x_1-v},Q_{x_1-w}\} $ in $H \sm \{x\}$ from $x_1$ to $Q_{a_2-v} \cup P[v,a_3] \cup \{a_1\}$ with $\{a_1,v\} \subseteq \text{end}(F_{x_1})$. Assume that the extremity of $Q_{x_1-a_1}$ (resp.\ $Q_{x_1-v}$, resp.\ $Q_{x_1-w}$) that is not $x_1$ is $a_1$ (resp.\ $v$, resp.\ $w$). 
We alter $P$ as follow : $P[x_1,a_1] :=Q_{x_1-a_1}$ and $P[x_1,v]:=Q_{x_1-v}$. 
Note that, after this alteration, $P$ is still a path and $P \cup \{a_2\}$ is still the rim of a 4-wheel centered on $x$ in $H$ with spokes $xa_2$, $xx_1$, $xx_2$ and $xx_3$.

Let us now prove that $w \in P]x_2,x_3[$.
\begin{itemize}
\item $w \notin Q_{a_2-v}$ for otherwise $Q_{a_2-v}[a_2,w] \cup Q_{x_1-w} \cup P[x_1,v] \cup P[v,a_3] \cup a_3a_2$ is the rim of a 4-wheel centered on $x$ in $H$ that uses exactly on marker edge.
\item $w \notin P]v,x_2]$ for otherwise $Q_{x_1-w} \cup P[w,a_3] \cup a_3a_2 \cup Q_{a_2-v} \cup P[x_1,v]$ is the rim of a 4-wheel centered on $x$ in $H$ that uses exactly on marker edge.
\item $w \notin P[x_3,a_3]$ for otherwise $Q_{x_1-w} \cup P[w,v] \cup Q_{a_2-v} \cup a_2a_1 \cup P[x_1,a_1]$ is the rim of a 4-wheel centered on $x$ in $H$ that uses exactly on marker edge.
\end{itemize}

So $w \in P]x_2,x_3[$. Observe that now if $v = x_2$, then $x$ is the center of a 4-wheel in $H$ such that the rim uses exactly one marker edge, so $v \neq x_2$. 

Now, $\{P[x_2,v], P[x_2,w]\}$ is a 2-fan from $x_2$ to 
$Q_{a_2-v} \cup Q_{x_1-w} \cup P[a_1,v] \cup P[w,a_3]$ in $H \sm \{x\}$. 
So, by Lemma \ref{perfect}, there is a 3-fan $F_{x_2} = \{Q_{x_1-v},Q_{x_2-w},Q_{x_2-u}\}$ 
from $x_2$ to $Q_{a_2-v} \cup Q_{x_1-w} \cup P[a_1,v] \cup P[w,a_3]$ 
in $H \sm \{x\}$ such that $\{v,w\} \subseteq \text{end}(F_{x_2})$. 
Suppose that the extremity of $Q_{x_1-v}$ (resp.\ $Q_{x_2-w}$, resp.\ $Q_{x_2-u}$) 
that is not $x_1$ is $v$ (resp.\ $w$, resp.\ $u$).
We alter $P$ as follow : $P[x_2,v] :=Q_{x_2-v}$ and $P[x_2,w]:=Q_{x_2-w}$. 
Note that, after this alteration, $P$ is still a path and $P \cup \{a_2\}$ is still the rim of a 4-wheel centered on $x$ in $H$ with spokes $xa_2$, $xx_1$, $xx_2$ and $xx_3$.
Let us show by discussing among the position of $u$ in 
$Q_{a_2-v} \cup Q_{x_1-w} \cup P[a_1,v] \cup P[w,a_3] $ 
that $x$ is the center of a 4-wheel of $H$ such that the rim uses at most one marker edge.
\begin{itemize}
\item If $ u \in Q_{a_2-v} \cup Q_{x_1-w}$, then, for the same reason why $v$ and $w$ had to be different from $x_2$, $x$ is the center of a 4-wheel centered on $x$ in $H$ such that the rim uses exactly on marker edge.
\item If $u \in P[a_1,x_1]$, then $Q_{x_2-u} \cup P[u,v] \cup Q_{a_2-v} \cup a_2a_3 \cup P[a_3,w] \cup P[x_2,w]$ is the rim of a 4-wheel centered on $x$ in $H$ that uses exactly on marker edge.
\item If $u \in P[x_1,v[$, then $Q_{x_2-u} \cup P[u,x_1] \cup Q_{x_1-w} \cup P[w,a_3] \cup a_3a_2 \cup Q_{a_2-v} \cup P[x_2,v]$ is the rim of a 4-wheel centered on $x$ in $H$ that uses exactly on marker edge. 
\item If $u \in P]w,x_3]$, then $Q_{x_2-u} \cup P[u,a_3] \cup a_3a_2 \cup Q_{a_2-v} \cup P[v,x_1] \cup Q_{x_1-w} \cup P[x_2,w]$ is the rim of a 4-wheel centered on $x$ in $H$ that uses exactly on marker edge.
\item If $u \in P[x_3,a_3]$, then $Q_{x_2-u} \cup P[u,w] \cup Q_{x_1-w} \cup P[x_1,a_1] \cup a_1a_2 \cup Q_{a_2-v} \cup P[x_2,v]$ is the rim of a 4-wheel centered on $x$ in $H$ that uses exactly on marker edge.
\end{itemize}
This completes the proof.
\end{proof}

As a trivial corollary of Theorem \ref{3connexe} we have the following result on 4-wheel-free graphs of connectivity 3.

\begin{corollary} \label{coro3connexe}
If $G$ is a 4-wheel-free graph and $\kappa(G)=3$, then $G$ contains at least two vertices of degree 3.
\end{corollary}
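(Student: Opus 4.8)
The plan is to read this off Theorem~\ref{3connexe}, using the fact recorded in Section~2 that a graph equipped with a fragment $F$ has two disjoint ends, one inside $F$ and one inside $\overline F$. The key point is that a $4$-wheel-free graph $G$ has $W(G)=\emptyset$, so the hypothesis $W(G)\cap F=\emptyset$ of Theorem~\ref{3connexe} is satisfied by \emph{every} set of vertices, in particular by every end of $G$. Hence Theorem~\ref{3connexe} forces every end of $G$ to be trivial; and a trivial end $\{w\}$ satisfies $\deg(w)=|N(\{w\})|=\kappa(G)=3$. So it is enough to produce two disjoint ends.

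Since $\kappa(G)=3$ forces $|V(G)|\ge 4$, I would first dispatch the case $|V(G)|=4$: then $G=K_4$, the unique $4$-vertex graph of connectivity $3$, and all four of its vertices have degree $3$. So assume $|V(G)|\ge 5$. Let $S$ be a cutset of $G$ with $|S|=3$. Each component $C$ of $G\sm S$ satisfies $N(C)=S$ (otherwise $N(C)$ is a cutset of size at most $2$, contradicting $\kappa(G)=3$) and $\overline C=V(G)\sm(C\cup S)\ne\emptyset$ (as $G\sm S$ has at least two components), so each such $C$ is a fragment.

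Fix one such fragment $F$. It contains an end $E_1\subseteq F$; likewise $\overline F$ is a fragment and contains an end $E_2\subseteq\overline F$; and $E_1\cap E_2=\emptyset$ since $F\cap\overline F=\emptyset$. Because $G$ is $4$-wheel-free we have $W(G)\cap E_i=\emptyset$ for $i=1,2$, so by Theorem~\ref{3connexe} both $E_1=\{u\}$ and $E_2=\{v\}$ are trivial. Then $\deg(u)=\deg(v)=3$ with $u\ne v$, which is the desired conclusion.

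I do not expect any real obstacle: this corollary is essentially a repackaging of Theorem~\ref{3connexe}. The only steps calling for a line of care are the separate treatment of $K_4$ (which has no fragment, so the two-disjoint-ends argument does not literally apply to it) and the routine check that a minimum-cutset component is a fragment once $|V(G)|\ge 5$.
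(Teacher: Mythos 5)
Your argument is correct and is precisely the one the paper leaves implicit when it calls this statement ``a trivial corollary'' of Theorem~\ref{3connexe}: since $G$ is 4-wheel-free, $W(G)=\emptyset$, so the two disjoint ends (one in a fragment $F$, one in $\overline F$) are both trivial, and each trivial end yields a vertex of degree $\kappa(G)=3$. Your separate treatment of $K_4$ (which has no fragment, so the two-disjoint-ends argument does not literally apply) is a careful detail the paper glosses over, but it does not change the approach.
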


\subsection{4-wheel-free graphs of connectivity 2} \label{subsection2}

\begin{theorem} \label{2connexe}
If $G$ is a 4-wheel-free graph of connectivity 2 and $F$ is an end of $G$, then either there is a vertex $v \in F$ of degree at most 3 or the end block containing $F$ is $K_{4,4}$.
\end{theorem}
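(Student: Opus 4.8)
The plan is to mimic the strategy used for connectivity $3$: pass to the end block $H$ containing $F$, where by Property~\ref{endBlock} $H$ is $3$-connected, and try to lift a $4$-wheel of $H$ centered at a vertex of $F$ to a $4$-wheel of $G$. First I would dispose of the easy cases. If $|F|=1$ then the single vertex of $F$ has only its two neighbours in $N(F)$ as neighbours in $G$, so it has degree $2\le 3$ and we are done. So assume $|F|\ge 2$; then $H$ is $3$-connected. Write $N(F)=\{a_1,a_2\}$ and recall $a_1a_2$ is the marker edge of $H$. If some vertex $v\in F$ has degree at most $3$ in $H$, then since $v$ has at most the two vertices $a_1,a_2$ as neighbours outside $F$ and those are already counted, $v$ has degree at most $3$ in $G$ as well, and we are done. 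Hence I may assume every vertex of $F$ has degree at least $4$ in $H$.

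The heart of the argument is then to show that, under these assumptions, either $H=K_{4,4}$ or $H$ contains a $4$-wheel centered at a vertex $x\in F$ whose rim uses at most one marker edge; the latter immediately yields a $4$-wheel of $G$ centered at $x$ (replace the at-most-one marker edge $a_1a_2$ by an $(a_1,a_2)$-path through $\overline F$, which exists since $G$ is $2$-connected), contradicting that $G$ is $4$-wheel-free. To produce such a wheel in $H$ I would argue by contradiction: suppose $H\ne K_{4,4}$ and no vertex of $F$ is the center of a $4$-wheel of $H$ with rim using at most one marker edge. Since there is only one marker edge, any rim using ``at most one marker edge'' is either one not through $a_1a_2$, or one that uses the single edge $a_1a_2$. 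The key point is that in a $3$-connected graph, Dirac's Theorem (Theorem~\ref{dirac}, part 2) gives, for any edge and any one extra vertex, a cycle through both; applied to the edge $a_1a_2$ and to vertices near a chosen $x\in F$, this lets us build cycles through $x$'s neighbours. Concretely, pick $x\in F$ of degree $\ge 4$ in $H$ and four neighbours $x_1,x_2,x_3,x_4$; I want a cycle in $H\sm\{x\}$ through these four that uses $a_1a_2$ at most once. If no such cycle exists, apply Theorem~\ref{watkins} to $H\sm\{x\}$ (which is $\kappa\ge 2$; one needs $\kappa(H\sm x)\ge 3$, i.e.\ $H$ $4$-connected, for the clean version — so more care is needed here) to get a small cutset separating the $x_i$; then run the fan/Perfect-lemma bookkeeping exactly as in the proof of Theorem~\ref{3connexe}, repeatedly rerouting subpaths of the rim through $F$ until a legal rim is forced, or until the structure collapses onto a $K_{4,4}$.

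The main obstacle I anticipate is precisely that $H$ need only be $3$-connected (not $4$-connected), so Theorem~\ref{watkins} and the $4$-fan arguments from the connectivity-$3$ case do not apply verbatim. I expect the fix is to use that $H$ has the marker edge $a_1a_2$ with $\{a_1,a_2\}$ a clique, and to work with the ``augmented'' object: either contract the marker edge, or add a vertex, to boost connectivity, exploiting that any cutset of $H$ of size $3$ must interact with $\{a_1,a_2\}$ in a controlled way (otherwise $\overline{F}$ together with that cutset would give a fragment of $G$ strictly inside $F$, contradicting that $F$ is an end — the same trick as in the proof of Property~\ref{endBlock}). Once connectivity is handled, the surgery on the rim is a finite case analysis on where rerouted subpaths land relative to $x_1,x_2,x_3,x_4$ and the marker edge, entirely analogous to (and no harder than) the itemized case checks in the proof of Theorem~\ref{3connexe}; the conclusion in the remaining extremal situation — when every rerouting is blocked — should pin $H$ down to $K_{4,4}$, as in Theorem~\ref{4connexe}. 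I would present the connectivity lemma first as a short claim, then the rim-surgery as the main body, closing with the observation that a legal $4$-wheel in $H$ lifts to one in $G$.
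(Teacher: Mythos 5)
Your reduction to the end block $H$ and the observation that a $4$-wheel of $H$ centered in $F$ lifts to $G$ (since $H$ has only one marker edge, replaceable by a path through $\overline F$) are exactly right, and they correspond to the paper's Claim that every $4$-wheel of $H$ has its center in $\{a_1,a_2\}$. But the heart of your plan --- redoing the Watkins--Mesner/Perfect fan surgery directly inside $H$ --- has a genuine gap that you yourself flag and do not close: Theorem~\ref{watkins} with $k=4$ needs $\kappa(H\sm x)\ge 3$, i.e.\ $H$ $4$-connected, and Property~\ref{endBlock} only guarantees $\kappa(H)\ge 3$. Your proposed fixes do not repair this: contracting the marker edge or adding a vertex does not raise the connectivity of $H$, and the claim that a $3$-cutset of $H$ ``must interact with $\{a_1,a_2\}$ in a controlled way'' is false --- a $3$-cutset $S\subseteq F$ merely produces a component $D\subseteq F$ with $N_G(D)\subseteq S$ of size $3$, which is not a fragment of $G$ and so contradicts nothing about $F$ being an end. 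So the case $\kappa(H)=3$ is a real case, and your sketch has no working argument for it.

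The paper's resolution is a short reduction rather than new surgery, and it explains the precise form in which the earlier results were stated. If $H$ is $4$-connected, then by the Claim $W(H)\subseteq\{a_1,a_2\}$, which is a clique, so $H$ is \emph{almost} $4$-wheel-free and Theorem~\ref{4connexe} forces $H=K_{4,4}$. If $\kappa(H)=3$, one finds an end $E$ of $H$ disjoint from $\{a_1,a_2\}$ (if an end contains $a_1$, then since $a_1a_2$ is an edge $a_2\in E\cup N(E)$, and an end inside $\overline E$ avoids both); such an $E$ lies in $F$ and satisfies $W(H)\cap E=\emptyset$ by the Claim, so Theorem~\ref{3connexe} --- whose hypothesis is exactly this local condition, not global $4$-wheel-freeness --- makes $E$ trivial, giving a vertex of $F$ of degree $3$. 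All the heavy fan/Watkins--Mesner work is thus already encapsulated in Theorems~\ref{3connexe} and~\ref{4connexe}, applied to $H$ or to its own ($4$-connected) end blocks; attempting to rerun it at connectivity $3$, as you propose, is where the argument would break down.
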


\begin{proof}
Let $G$ be a 4-wheel-free graph of connectivity 2, let $F$ be an end of $G$ with $N(F)=\{a,b\}$ and let $H$ be the end block of $G$ containing $F$. If $F$ is trivial then one of the outcome of the theorem holds. So we may assume that $|F| \ge 2$ and therefore $H$ is 3-connected.

\begin{claim} \label{claim}
If $(x,C)$ is a 4-wheel of $H$, then $x \in \{a,b\}$.
\end{claim}

\begin{proof}
Let $(x,C)$ be a 4-wheel of $H$ and assume $x \notin \{a,b\}$. $C$ has to contain the edge $ab$ because it is not a 4-wheel in $G$, but since we may replace $ab$ by a path in $\overline F \cup \{a,b\}$ linking $a$ to $b$, we have a contradiction
\end{proof}

Assume first that $H$ is 4-connected. By (\ref{claim}) $H$ is an almost 4-wheel-free graph and thus, by Theorem \ref{4connexe} it is $K_{4,4}$ and one of the outcome of the theorem holds.

So we may assume that $\kappa(H)=3$. Let $E$ be an end of $H$. If $E \cap \{a,b\} = \emptyset$ then, by (\ref{claim}) and Theorem \ref{3connexe}, $E$ is trivial and we are done. So we may assume that say $a \in E$ which implies that $b \in E \cup N(E)$ and thus there is an end of $H$ included in $\overline E$ that does not intersect $\{a,b\}$ and which is trivial by Theorem \ref{3connexe}.
\end{proof}

\subsection{4-wheel-free graphs of any connectivity} \label{subsectionMain}
 Taking advantage of results obtained in  previous subsections we now prove the main result announced in the introduction.

\begin{theorem} \label{coromain}
If $G$ is a 4-wheel-free graph then either it contains a vertex of degree at most 3, or it contains a pair of twins.
\end{theorem}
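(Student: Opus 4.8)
The plan is to run a case analysis on the connectivity $\kappa(G)$, feeding the results of the previous three subsections. We may assume $|V(G)|\ge 2$, otherwise the unique vertex has degree $0$. Since $G$ is $4$-wheel-free, $W(G)=\emptyset$, so $G$ is in particular almost $4$-wheel-free. If $\kappa(G)\ge 5$, Lemma \ref{5connexe} gives $W(G)=V(G)\ne\emptyset$, a contradiction, so $\kappa(G)\le 4$. If $\kappa(G)=4$, Theorem \ref{4connexe} yields $G\cong K_{4,4}$, and any two vertices of $K_{4,4}$ on a common side of the bipartition are non-adjacent with equal neighbourhood, hence are twins. If $\kappa(G)=3$, Corollary \ref{coro3connexe} already supplies (two) vertices of degree $3$. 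If $\kappa(G)=2$, fix an end $F$ of $G$ with $N(F)=\{a,b\}$ and apply Theorem \ref{2connexe}: either some vertex of $F$ has degree at most $3$ in $G$ and we are done, or the end block $H$ containing $F$ is $K_{4,4}$; in that case the marker edge $ab$ puts $a$ and $b$ in different sides of $H$, so $F$ meets each side of $H$ in exactly three vertices, and two vertices $x,x'$ of $F$ in a common side are non-adjacent in $G$ with $N_G(x)=N_H(x)=N_H(x')=N_G(x')$ (the outer equalities because $x,x'\in F$ have no $G$-neighbour outside $V(H)$, the inner one because the marker edge is not incident to them), hence are twins.

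It remains to reduce the case $\kappa(G)\le 1$ to the above. If $G$ is disconnected, one of its components $K$ is a proper induced subgraph, hence $4$-wheel-free; running the argument on $K$ (which is connected, so $\kappa(K)\ge 1$) yields a vertex of degree at most $3$ or a pair of twins in $K$, which works in $G$ as well since no edge of $G$ leaves $K$. If $\kappa(G)=1$, pick an end $F$ with $N(F)=\{c\}$; if $F$ is trivial its vertex has degree $1$, so assume $|F|\ge 2$. Then the end block $H=G[F\cup\{c\}]$ has no marker edge, is $4$-wheel-free, and is $2$-connected by Property \ref{endBlock}. If $\kappa(H)\ge 4$, then as in the case $\kappa(G)=4$ we get $H\cong K_{4,4}$, and two vertices of $F$ lying in the side of $H$ not containing $c$ are twins of $G$. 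If $\kappa(H)=3$, Corollary \ref{coro3connexe} gives two degree-$3$ vertices of $H$, at least one of which lies in $F=V(H)\setminus\{c\}$ and thus has degree $3$ in $G$. If $\kappa(H)=2$, either $H$ is complete (hence $H=K_3$ and the two vertices of $F$ have degree $2$ in $G$), or we choose an end $F'$ of $H$ with $c\notin F'$ (so $F'\subseteq F$) and apply Theorem \ref{2connexe} to $H$: either a vertex of $F'$ has degree at most $3$ in $H$, hence in $G$, or the end block of $H$ containing $F'$ is $K_{4,4}$, and then two vertices of $F'$ in a common side of that $K_{4,4}$ are twins of $G$ by the same neighbourhood chase as above.

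Technically nothing here is deep: the three subsection theorems carry the load. The only thing needing care is the bookkeeping in the low-connectivity reductions, namely checking that the small-degree vertex or the twin pair produced by a sub-result actually lands inside the end $F$ (equivalently, avoids the attachment set $N(F)$), so that degrees and neighbourhoods in $G$ agree with those computed in the end block, and that none of the marker edges added when forming end blocks is incident to the chosen vertices. This is precisely why Corollary \ref{coro3connexe} is stated with \emph{two} degree-$3$ vertices and why the $K_{4,4}$ outcomes are convenient: both leave enough room to dodge $N(F)$. I expect the write-up of the $\kappa(G)\le 1$ case to be the only slightly delicate part.
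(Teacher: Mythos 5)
Your proof is correct and follows essentially the same route as the paper: a case analysis on $\kappa(G)$ using Theorem \ref{4connexe}, Corollary \ref{coro3connexe} and Theorem \ref{2connexe}, with the low-connectivity cases reduced to an end block via Property \ref{endBlock}. The only difference is that you spell out the bookkeeping (twins in $K_{4,4}$, degrees and neighbourhoods transferring from the end block to $G$, the disconnected and $K_3$ subcases) which the paper's one-paragraph proof leaves implicit.
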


\begin{proof}
If $\kappa(G) \ge 4$ (resp.\ $\kappa(G)=3$, resp.\ $\kappa(G)=2$), then the result follows by Theorem \ref{4connexe} (resp.\ Corollary \ref{coro3connexe}, resp.\ Theorem \ref{2connexe}). So we may assume that $\kappa(G)=1$. Let $H$ be an end block of $G$. Since $H$ is clearly $4$-wheel-free and $\kappa(H) \ge 2$ the result follows by applying Theorem \ref{4connexe} or Corollary \ref{coro3connexe} or Theorem \ref{2connexe} on $H$ among $\kappa(G)=4$ or $3$ or $2$.
\end{proof}

Note that interestingly we don't need induction to prove Theorem \ref{coromain}, we just go to look for the demanded structure in end blocks of the graph.

\section{Acknowledgment}
The author thanks Nicolas Trotignon for helpful and constructive discussions and for his careful reading of the manuscript.

\end{document}